\newcommand{\croix}{\mbox{\scriptsize $\times$}}
\newcommand{\compose}{\mbox{\tiny o}}
\newcommand{\entier}{\mbox{$\mathbb{N}$}}
\newcommand{\impli}{\mbox{\small$\Longrightarrow$}}
\newcommand{\arrow}{\mbox{\small$\longrightarrow$}}
\newcommand{\fleche}[1]{\mbox{$\mathop{\arrow}\limits^{#1}$}}
\newcommand{\nofleche}[2]{\mbox{$\ {\tiny/}\hspace*{-1.1em}\mathop{\arrow}
                                           \limits^{#1}_{\mbox{\tiny$#2$}}$}}
\newcommand{\flecheInd}[2]{\mbox{$\mathop{\arrow_{\mbox{\tiny$#2$}}}
\limits^{\!\!\!\!\!#1}$}}
\newcommand{\chemin}[1]{\mbox{$\mathop{\impli}\limits^{#1}$}}
\newcommand{\cheminInd}[2]{\mbox{$\mathop{\Longrightarrow_{\mbox{\tiny$#2$}}}
\limits^{\!\!\!\!\!#1}$}}
\newcommand{\InfSup}[1]{\mbox{$\mathop{<}\!{#1}\!\mathop{>}$}}
\newcommand{\mBAS}[1]{\mbox{\raisebox{-0.75ex}{\scriptsize \rm $\!#1$}}}
\newcommand{\enonce}[2]{\par{\leftskip0em\vspace{1em}\noindent
                             {\bf #1.\ }{\it #2}\par\vspace{1em}}
                        \noindent{\it Proof.}\\[0.0em]}
\newcommand{\cqfd}{\\\mbox{\large\bf $\blacktriangleleft$}\\}
\begin{document}

\title{Synchronizable functions on integers}

\titlerunning{Synchronizable functions}

\author{Didier Caucal \and Chlo\'e Rispal}

\authorrunning{D. Caucal and C. Rispal}

\institute{CNRS, LIGM, University Gustave Eiffel, France\\
\email{\{didier.caucal,chloe.rispal\}@univ-eiffel.fr}}

\maketitle 

\begin{abstract}
{\indent}For all natural numbers \,$a,b$ \,and \,$d > 0$, we consider the 
function \,$f_{a,b,d}$ \,which associates \,$\frac{n}{d}$ \,to any integer 
\,$n$ \,when it is a multiple of \,$d$, and \,$an + b$ \,otherwise; in 
particular \,$f_{3,1,2}$ \,is the Collatz function. 
Coding in base \,$a > 1$ \,with \,$b < a$, we realize these functions by 
input-deterministic letter-to-letter transducers with additional output final
words. 
This particular form allows to explicit, for any integer~\,$n$, the composition 
\,$n$ \,times of such a transducer to compute \,$f^{\,n}_{a,b,d}$. 
We even realize the closure under composition \,$f^{\,*}_{a,b,d}$ \,by an 
infinite input-deterministic letter-to-letter transducer with a regular set of 
initial states and a length recurrent terminal function.
\end{abstract}

\section{Introduction}

Functions on integers have been studied as word functions using an integer 
base. With this coding, some functions on integers can be deterministically 
described by $2$-automata~\cite{RS}. 
In 1966, Ginsburg introduced the sequential transducers, having transitions 
labeled by an input letter and an output word, and computing functions 
deterministically in input~\cite{Gi}. 
Since 1977, Schutzenberger has extended sequential transducers with a terminal 
function associating an output word to each terminal state \cite{Sc}. 

For all natural numbers \,$a,b$ \,and \,$d > 0$, we consider the integer 
function \,$f_{a,b,d}$ \,which 
associates \,$\frac{n}{d}$ \,to any integer~\,$n$ \,when it is a multiple of 
\,$d$ \,and \,$an + b$ \,otherwise. In particular, \,$f_{3,1,2}$ \,is the 
Collatz function~\cite{La}. 
To realize the function \,$f_{a,b,d}$\,, \,the transducer must compute the two 
operations of division by~\,$d$ \,and multiplication by \,$a$. 
The first natural approach is to take the base~\,$d$ \,with the least 
significant digit to the left to see right away if the input is a 
multiple of~\,$d$ \,and, if not, to realize the multiplication by \,$a$ 
\,starting from the left. 
Another way is to choose the base \,$a$ \,to solve the multiplication by \,$a$ 
\,and to realize the division by \,$d$ \,starting from the most significative
digit to the left.

To realize the functions \,$f_{a,b,d}$ \,and its powers, we just need two 
particular forms of left-synchronized and right-synchronized \cite{EM} 
sequential transducers: the prefix and the suffix forms. 
The prefix sequential transducers are the letter-to-letter sequential 
transducers \,{\it i.e.} \,the left-synchronized sequential transducers without 
\,$\varepsilon$-output transition. 
The suffix sequential transducers are the right-synchronized 
sequential transducers without \,$\varepsilon$-input transition. 
Both families of functions realized by prefix and suffix sequential transducers 
are closed under composition, intersection and difference in a quadratic way.

To realize the function \,$f_{a,b,d}$, the choice of the base \,$d$ \,gives 
a simple suffix sequential transducer but it is not suitable to specify, 
for all integers \,$n$, a generic transducer computing its composition \,$n$ 
times. However, the base \,$a$ \,turns out to be more revelant: in the case
where \,$b < a$, which is the case of the Collatz function, it allows to
define a prefix sequential transducer computing \,$f_{a,b,d}$ \,and to
explicit, for any integer \,$n$, the composition \,$n$ \,times of this
transducer. Even better, still under the condition \,$b < a$, we construct
an infinite input-deterministic prefix transducer realizing the
iteration of the composition \,$f^{\,*}_{a,b,d}$. Its terminal function is not
simply defined as the union, for all integers \,$n$ \,of the previous
terminal functions for the \,$f^{\,n}_{a,b,d}$, but we give a length
recurrent definition of it. Finally, we give a geometric representation
of this transducer in the form of a cone where for each integer \,$n$, the
transducer of the division by~\,$d^n$ \,is represented by a circular section 
and the terminal function by transitions going from one section to its 
adjacent smaller section.

\section{Transducers}

{\indent}A (finite) transducer \cite{RS,EM,Be} is a finite 2-automaton: it is 
labeled by pairs of words, or more simply by pairs made with letters or the 
empty word. 
We recall the composition of transducers and its iteration.\\

Let \,$A$ \,be an {\it input alphabet} \,and \,$B$ \,be an {\it output 
alphabet}. A {\it transducer} \,${\cal T} \,= \,(Q,I,F,T)$ \,over 
\,$(A,B)$, or over \,$A$ \,when \,$B = A$, is defined by a set \,$Q$ \,of 
{\it states}, two subsets \,$I$ \,and \,$F$ \,of~\,$Q$ \,of resp.  
{\it initial states} \,and {\it final states}, and a subset \,$T$ \,of 
\,$Q{\croix}(A \cup \{\varepsilon\}){\croix}(B \cup \{\varepsilon\}){\croix}Q$. 
Each \,$(p,a,b,q) \in T$, also denoted by \,$p\ \fleche{a/b}_T\ q$ 
\,or \,$p\ \fleche{a/b}\ q$ \,when \,$T$ \,is clear from the context, is a 
{\it transition} \,from state \,$p$ \,to state \,$q$ \,labeled by \,$(a,b)$, 
or more simply by \,$a/b$\,, \,of {\it input} \,$a$ \,and of {\it output}~$b$. 
We write \,$p\ \fleche{a/\cdot}\ q$ \,and \,$p\ \fleche{\cdot/b}\ q$ \,if there 
exists respectively \,$b$ \,and \,$a$ \,such that \,$p\ \fleche{a/b}\ q$. 
By default, a transducer is finite: it has only a finite number of states. 
We represent~\,${\cal T}$ \,by its graph \,$(Q,T)$ \,with incoming arrows to 
mark its initial states and outgoing arrows from its final states. 
Here is a representation of a transducer \,${\cal T}_0$ \,over \,$\{a,b\}$\,:
\begin{center}
\includegraphics{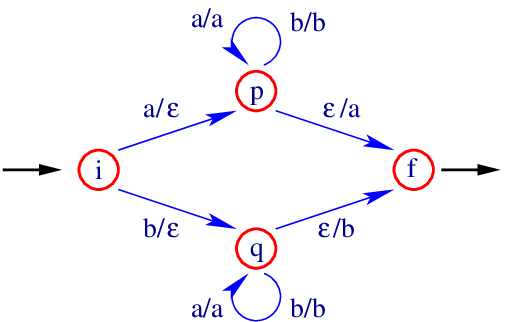}\\
\end{center}
A sequence \,$(p_0,a_1,b_1,p_1\ldots,a_n,b_n,p_n)$ \,of \,$n$ \,consecutive 
transitions
\,$p_0\ \fleche{a_1/b_1}\ p_1$\,,\\
$\ldots$,\,$p_{n-1}\ \fleche{a_n/b_n}\ p_n$ \,is a 
{\it path} \,of {\it length} \,$n$ \,of \,$T$ \,and we 
write \,$p_0\ \cheminInd{u/v}{T}\ p_n$ \,or \,$p_0\ \chemin{u/v}\ p_n$ \,to 
only indicate the {\it source} \,$p_0$\,, the {\it goal} \,$p_n$, the 
{\it input} \,$u = a_1{\ldots}a_n$ \,and the {\it output} 
\,$v = b_1{\ldots}b_n$ \,of the path. The path \,$(p_0)$ \,of length \,$0$ \,is 
denoted by \,$p_0\ \chemin{\varepsilon/\varepsilon}\ p_0$. 
We also write \,$p\ \chemin{u/\cdot}\ q$ \,and \,$p\ \chemin{\cdot/v}\ q$ \,if 
there exists respectively \,$v$ \,and \,$u$ \,such that \,$p\ \chemin{u/v}\ q$. 
A path from an initial vertex to a final vertex is an {\it accepting path}. 
The transducer \,${\cal T}$ \,{\it realizes} \,or \,{\it computes} \,the binary 
relation\\[0.25em]
\hspace*{3em}$\InfSup{\cal T} \ = \ \{\ (u,v) \in A^*{\croix}B^*\ |\ 
\exists\ i \in I\ \,\exists\ f \in F\ (i\,\ \cheminInd{u/v}{T}\ f)\ \}$
\\[0.25em]
of the labels of its accepting paths. The above transducer \,${\cal T}_0$ 
\,realizes the function \,$xu\ \mapsto\ ux$ \,for any word \,$u$ \,and letter 
\,$x$ \,over \,$\{a,b\}$. 
By exchanging the input with the output of the transitions of a transducer 
\,${\cal T} \,= \,(Q,I,F,T)$, we obtain the following 
{\it inverse transducer}\,:\\[0.25em]
\hspace*{3em}${\cal T}^{-1} \,= \,(Q,I,F,T^{-1})$ \ \ where \ \ 
$T^{-1} \,= \,\{\ (p,b,a,q)\ |\ (p,a,b,q) \in T\ \}$\\[0.25em]
which realizes the inverse relation \,$\InfSup{\cal T}^{-1}$ \,of the relation
computed by \,${\cal T}$.\\
We also associate to any transducer \,${\cal T} \,= \,(Q,I,F,T)$ \,its 
{\it mirror transducer}\,:\\[0.25em]
\hspace*{3em}$\widetilde{\cal T} \,= \,(Q,F,I,\widetilde{T})$ \ where \ 
$\widetilde{T} \,= \,\{\ (q,a,b,p)\ |\ (p,a,b,q) \in T\ \}$\\[0.25em]
which reverses the arrows of the transitions without changing the labels and
realizes from \,$F$ \,to \,$I$ \,the mirror of the relation computed by 
\,${\cal T}$\,: 
\,$\InfSup{\widetilde{\cal T}}$ \,is equal to 
\,$\{\,(\tilde{u},\tilde{v})\,|\,(u,v) \in \InfSup{\cal T}\,\}$ 
\,denoted by \,$\widetilde{\InfSup{\cal T}}$.\\
The {\it composition} \,of a transducer \,${\cal T} \,= \,(Q,I,F,T)$ \,over 
\,$(A,B)$ \,with a transducer \,${\cal T}' \,= \,(Q',I',F',T')$ \,over 
\,$(B,C)$ \,is the following transducer over \,$(A,C)$\,:\\[0.25em]
\hspace*{1em}\begin{tabular}{lrcl}
 & ${\cal T}\,\compose\,{\cal T}'$ & \,$=$\, & 
$(Q{\croix}Q',I{\croix}I',F{\croix}F',T\,\compose\,T')$\\[0.25em]
where\, & $T\,\compose\,T'$ & $=$ & 
$\{\ (p,p')\ \fleche{a/\varepsilon}\ (q,p')\ |\ 
p\ \fleche{a/\varepsilon}_T\ q \,\wedge \,p' \in Q'\ \}$\\[0.25em]
 & & $\cup$ & 
$\{\ (p,p')\ \fleche{\varepsilon/b}\ (p,q')\ |\ 
p \in Q \,\wedge \,p'\ \fleche{\varepsilon/b}_{T'}\ q'\ \}$\\[0.25em]
 & & $\cup$ & $\{\ (p,p')\ \fleche{a/c}\ (q,q')\ |\ \exists\ b  
\ (p\ \fleche{a/b}_T\ q \,\wedge \,p'\ \fleche{b/c}_{T'}\ q')\ \}$
\end{tabular}\\[0.5em]
it realizes \,$\InfSup{\cal T}\,\compose\ \InfSup{{\cal T}'}$.\\
The composition \,$n \geq 0$ \,times of a transducer 
\,${\cal T} \,= \,(Q,I,F,T)$ \,over \,$A$ \,is the transducer 
\,${\cal T}^n \,= \,(Q^n,I^n,F^n,T^n)$ \,where 
\,$T^0 \,= \,\{\,\varepsilon\ \fleche{a/a}\ \varepsilon\ |\ a \in A\,\}$ 
\,and\\[0.25em]
\hspace*{1em}\begin{tabular}{rcl}
$T^{n+1}$ & $=$ & $\{\ (p_1,\ldots,p_n,p_{n+1})\ \fleche{a/b}\ 
(q_1,\ldots,q_n,q_{n+1})\ |$\\[0.25em]
 & & \hspace*{0em}$((p_1,\ldots,p_n),p_{n+1})\ \fleche{a/b}_{T^n\,\compose\,T}\ 
((q_1,\ldots,q_n),q_{n+1})\ \}$ \ for any \,$n \geq 0$.
\end{tabular}\\[0.25em]
So \,${\cal T}^n$ \,realizes the relational composition \,$n$ \,times of 
\,$\InfSup{\cal T}$\,: \,$\InfSup{{\cal T}^n} \,= \,\InfSup{\cal T}^n$.\\
In particular \,${\cal T}^0 \,= 
\,(\{\varepsilon\},\{\varepsilon\},\{\varepsilon\},\{\,\varepsilon\ 
\fleche{a/a}\ \varepsilon \mid a \in A\,\}$ \,realizes the identity relation 
on \,$A^*$. 
For any \,$P \subseteq Q$, \,$P^* \,= \,\bigcup_{n \geq 0}P^n$ \,is the set of 
tuples of elements of~\,$P$ \,{\it i.e.} of words over \,$P$.\\
The {\it composition closure} \,of \,${\cal T} \,= \,(Q,I,F,T)$ \,is the 
infinite transducer\\[0.25em]
\hspace*{9em}${\cal T}^* \,= \,(Q^*,I^*,F^*,T^*)$ \ where \ 
$T^* \,= \,\bigcup_{n \geq 0}T^n$\\[0.25em]
which realizes the relation 
\,$\InfSup{{\cal T}^*} \,= \,\bigcup_{n \geq 0}\InfSup{\cal T}^n$.\\
For \,${\cal T}_1 \,= 
\,(\{p\},\{p\},\{p\},\{\,p\ \fleche{a/b}\ p\,,\,p\ \fleche{b/a}\ p\,\})$, its 
composition closure \,${\cal T}_1^*$ \,is\\[0.25em]
\hspace*{0.25em}$(p^*,p^*,p^*,\{p^{2n}\,\fleche{a/a}\,p^{2n},\,p^{2n}\,
\fleche{b/b}\,p^{2n},\,p^{2n+1}\,\fleche{a/b}\,p^{2n+1},\,p^{2n+1}\,\fleche{b/a}
\,p^{2n+1}\,|\,n \geq 0\})$.\\[0.25em]
The relations realized by the (finite) transducers are the 
{\it rational relations} namely those obtained from the finite binary 
relations by applying a finite number of times the {\it rational operations} 
of union, concatenation (componentwise) and its iteration (the Kleene star 
operation).
\begin{proposition}\label{RabinScott}
{\rm\cite{EM}}\,The family of rational relations is closed under composition, 
inverse, mirror, union, concatenation and its iteration.
\end{proposition}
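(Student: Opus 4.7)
The plan is to handle each of the six operations by an explicit finite-transducer construction starting from transducers ${\cal T} = (Q,I,F,T)$ and ${\cal T}' = (Q',I',F',T')$ realizing the given rational relations, and to check in each case that the constructed transducer is finite and realizes the desired relation. Three of the six cases are already done in the preceding text:

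For composition, inverse and mirror the constructions ${\cal T}\compose{\cal T}'$, ${\cal T}^{-1}$ and $\widetilde{\cal T}$ have been defined in the excerpt, together with the statement that they realize respectively $\InfSup{\cal T}\compose\InfSup{{\cal T}'}$, $\InfSup{\cal T}^{-1}$ and $\widetilde{\InfSup{\cal T}}$. So for these three cases I would just recall the constructions and, for composition, sketch the verification that a path in $T\compose T'$ projects onto a pair of accepting paths, one in $T$ and one in $T'$, whose outputs match up; for inverse and mirror the verification is immediate by inspection of the definitions.

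For the three remaining operations I would, after renaming states so that $Q$ and $Q'$ are disjoint, give the following constructions. For union, take $(Q\cup Q',\,I\cup I',\,F\cup F',\,T\cup T')$; every accepting path lies entirely in one component, so the relation realized is $\InfSup{\cal T}\cup\InfSup{{\cal T}'}$. For concatenation, take $(Q\cup Q',\,I,\,F',\,T\cup T'\cup S)$ where $S = \{\,f\ \fleche{\varepsilon/\varepsilon}\ i'\ |\ f\in F,\ i'\in I'\,\}$; an accepting path must use exactly one $\varepsilon/\varepsilon$-edge from $S$, splitting it into an accepting path of ${\cal T}$ followed by one of ${\cal T}'$, so the concatenation of labels is precisely $\InfSup{\cal T}\cdot\InfSup{{\cal T}'}$ (componentwise). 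For iteration, introduce a fresh state $r$ and take $(Q\cup\{r\},\,\{r\},\,\{r\},\,T\cup\{\,r\ \fleche{\varepsilon/\varepsilon}\ i\ |\ i\in I\,\}\cup\{\,f\ \fleche{\varepsilon/\varepsilon}\ r\ |\ f\in F\,\})$; accepting paths decompose into a concatenation of $n\geq 0$ accepting paths of ${\cal T}$, giving exactly $\bigcup_{n\geq 0}\InfSup{\cal T}^{\,n}$ where here the $n$-th power is with respect to componentwise concatenation (not composition).

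All six constructions yield finite transducers from finite ones, so rationality is preserved. The only mild point of care is the bookkeeping with $\varepsilon/\varepsilon$-transitions: the transducer model of the paper already allows transitions with $a\in A\cup\{\varepsilon\}$ and $b\in B\cup\{\varepsilon\}$, so the new $\varepsilon/\varepsilon$-edges used for concatenation and iteration are legal, and I do not need to eliminate them to stay within the class. No step looks to be a serious obstacle; the main thing to watch is that in the iteration construction one really does have to add a fresh initial-final state $r$ (rather than just adding back-edges from $F$ to $I$), so that the empty pair $(\varepsilon,\varepsilon)$ is accepted and so that concatenation of distinct iterations does not get spuriously fused through a shared internal state.
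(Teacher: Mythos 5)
The paper gives no proof of this proposition at all: it is quoted from Elgot and Mezei \cite{EM}, and the appendix of missing proofs does not cover it. So there is no ``paper's proof'' to compare against, and your argument has to be judged on its own. It is correct, and it is the standard one. Two remarks. First, half of the work is already done by the paper's definitions: rational relations are \emph{defined} as the closure of the finite binary relations under union, componentwise concatenation and its iteration, so closure under those three operations holds by definition and needs no automaton construction; your disjoint-union, $\varepsilon/\varepsilon$-bridge and fresh-state constructions are nonetheless sound (the bridging edges are legal in the paper's model, exactly one bridge edge is crossed in the concatenation case since no edge returns from $Q'$ to $Q$, and your point about needing a fresh initial--final state $r$ for the star, rather than back-edges plus promoting initial states to final, is the right one). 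The genuinely non-definitional content of the proposition is closure under composition, inverse and mirror, which requires the equivalence between rational relations and transducer-realizable relations that the paper asserts just above the statement. Inverse and mirror are indeed immediate from the constructions ${\cal T}^{-1}$ and $\widetilde{\cal T}$. For composition you defer entirely to the paper's product construction $T\,\compose\,T'$; that is acceptable since the paper asserts it realizes $\InfSup{\cal T}\,\compose\,\InfSup{{\cal T}'}$, but be aware that the soundness and completeness of that interleaving of $\varepsilon$-input and $\varepsilon$-output moves is precisely where the substance of the Elgot--Mezei theorem lies, so a fully self-contained proof would still owe that verification rather than the one-sentence projection argument you sketch.
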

From now on, we consider transducers realizing functions. 
We already notice that the intersection and the difference of two rational 
functions are generally not rational. For instance, the following functions are 
sequential:\\[0.25em]
\hspace*{1em}$f_1 \,= \,\{\ (a^mba^n,a^m)\ |\ m,n \geq 0\ \}$ \ \ and \ \ 
$f_2 \,= \,\{\ (a^mba^n,a^n)\ |\ m,n \geq 0\ \}$\\[0.25em]
but \,$f_1 \cap f_2 \,= \,\{\ (a^nba^n,a^n)\ |\ n \geq 0\ \}$ 
\,and \,$f_1 - f_2 \,= \,\{\ (a^mba^n,a^m)\ |\ m \neq n\ \}$ \,are not rational 
since their domains are not regular languages.

\section{Sequential transducers}

{\indent}The sequential transducers have been defined \cite{Gi,Sc} to 
compute functions in a deterministic way according to the inputs.
The functions are realized from left to right reading one letter at a time on 
input and writing a word on output for each letter.\\

We use transitions of the form \,$\fleche{a/v}$ \,where \,$a$ \,is a letter 
and \,$v$ \,is a word. 
A transducer over \,$(A,B)$ \,can be defined in a {\it terminal form} 
\,${\cal T} \,= \,(Q,I,\omega,T)$ \,with a finite transition set 
\,$T \,\subset \,Q{\croix}A{\croix}B^*{\croix}Q$ \,and a 
{\it terminal function} \,$\omega$ \,from~\,$Q$ \,into the set \,Reg$(B^*)$ 
\,of regular languages over \,$B$. It realizes the relation\\[0.25em]
\hspace*{3em}$\InfSup{\cal T} \ = \ \{\ (u,vw)\ |\ 
\exists\ i \in I\ \,\exists\ f \in {\rm dom}(\omega)\ 
(i\,\ \cheminInd{u/v}{T}\ f \,\wedge \,w \in \omega(f))\ \}$.\\[0.25em]
The domain of \,$\omega$ \,is the set of final states. 
We also represent \,${\cal T}$ \,by its graph \,$(Q,T)$ \,and for any 
\,$q \in {\rm dom}(\omega)$, we add an outgoing arrow from \,$q$ \,labeled 
by \,$\omega(q)$ \,and this label can be omitted when 
\,$\omega(q) = \{\varepsilon\}$. 
Here is a terminal form representation of \,\,${\cal T}_0$\,:
\begin{center}
\includegraphics{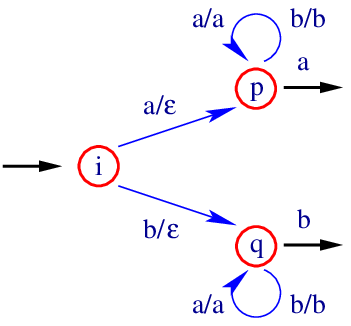}\\
\end{center}
Note that any transducer \,${\cal T} \,= \,(Q,I,F,T)$ \,has a terminal form 
\,${\cal T}' \,= \,(Q,I,\omega',T')$ \,such that 
\,$\InfSup{{\cal T}'} \,= \,\InfSup{\cal T}$\,; it is defined by\\[0.25em]
\hspace*{6em}\begin{tabular}{rcl}
$T'$ & $=$ & $\{\ p\ \fleche{a/vb}\ q\ |\ a \in A \,\wedge \,\exists\ r 
\,(p\ \chemin{\varepsilon/v}_T\ r\ \fleche{a/b}_T\ q)\ \}$\\[0.25em]
$\omega'(q)$ & $=$ & $\{\ v\ |\ \exists\ f \in F \ 
(q\ \chemin{\varepsilon/v}_T\ f)\ \}$.
\end{tabular}\\[0.25em]
The terminal form is well suited to define the determinism on input.\\
We say that a transition set \,$T \,\subset \,Q{\croix}A{\croix}B^*{\croix}Q$ 
\,is {\it input-deterministic} \,if\\[0.25em]
\hspace*{9em}$(p\ \fleche{a/u}_T\ q \,\wedge \,p\ \fleche{a/v}_T\ r) \ \ 
\Longrightarrow \ \ (u = v \,\wedge \,q = r)$.\\[0.25em]
We also say that \,$T$ \,is {\it input-complete} \,if \ 
$q\ \fleche{a/\cdot}$ \,for any \,$q \in Q$ \,and \,$a \in A$.\\
Such a general approach allows us to compute functions (resp. mappings) in a
deterministic (and complete) way \cite{Sc}.\\
A {\it sequential transducer} \,over \,$(A,B)$ \,is a transducer of terminal 
form \,$(Q,i,\omega,T)$ \,with a unique initial state \,$i$, a terminal 
function \,$\omega : Q\ \fleche{}\ B^*$ \,and an input-deterministic transition 
set \,$T \,\subset \,Q{\croix}A{\croix}B^*{\croix}Q$. 
In particular \,${\cal T}_0$ \,is a sequential transducer. 
The relation realized by a sequential transducer is a 
{\it sequential function}. 
The composition of sequential transducers \,${\cal T} \,= \,(Q,I,\omega,T)$ 
\,by \,${\cal T}' \,= \,(Q',I',\omega',T')$ \,is then expressed by
\\[0.25em]
\hspace*{1em}\begin{tabular}{lrcl}
 & ${\cal T}\,\compose\,{\cal T}'$ & $=$ & 
$(Q{\croix}Q',I{\croix}I',\omega\,\compose\,\omega',T\,\compose\,T')$\\[0.25em]
where & $T\,\compose\,T'$ & $=$ & 
$\{\ (p,p')\ \fleche{a/v}\ (q,q')\ |\ \exists\ u 
\ (p\ \fleche{a/u}_T\ q \,\wedge \,p'\ \chemin{u/v}_{T'}\ q')\ \}$\\[0.25em]
and & $\omega\,\compose\,\omega'((q,p'))$ & $=$ & 
$v.\omega'(q')$ \,for any \,$q \in {\rm dom}(\omega)$, 
$q' \in {\rm dom}(\omega')$, $p'\ \cheminInd{\omega(q)/v}{T'}\ q'$.
\end{tabular}\\
\begin{proposition}\label{Schutzenberger}
{\rm\cite{Sc}}\,The sequential functions are preserved by composition.
\end{proposition}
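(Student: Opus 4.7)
The plan is to verify that the composition construction ${\cal T}\compose{\cal T}'$ defined just above the statement, with ${\cal T}=(Q,i,\omega,T)$ and ${\cal T}'=(Q',i',\omega',T')$ sequential, is itself a sequential transducer and realizes $\InfSup{\cal T}\compose\InfSup{{\cal T}'}$. Three things must be checked: the new transducer has a unique initial state; $T\compose T'$ is input-deterministic and $\omega\compose\omega'$ is well-defined as a partial function into $C^*$; and its relational behaviour is indeed the composition of the two given sequential functions.

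The first point is immediate since $\{i\}\times\{i'\}$ is a singleton. For input-determinism, assume $(p,p')\fleche{a/v_1}(q_1,q_1')$ and $(p,p')\fleche{a/v_2}(q_2,q_2')$ both lie in $T\compose T'$. By definition there exist $u_1,u_2\in B^*$ with $p\fleche{a/u_j}_T q_j$ and $p'\chemin{u_j/v_j}_{T'} q_j'$ for $j\in\{1,2\}$. Input-determinism of $T$ forces $u_1=u_2$ and $q_1=q_2$; reading this common word letter by letter in the input-deterministic $T'$ from $p'$ then leaves no choice, so $v_1=v_2$ and $q_1'=q_2'$. The same uniqueness of a $T'$-path starting at $p'$ with input label $\omega(q)$ makes $\omega\compose\omega'((q,p'))=v\cdot\omega'(q')$ a single well-defined word whenever $q\in{\rm dom}(\omega)$, the path $p'\chemin{\omega(q)/v}_{T'} q'$ exists, and $q'\in{\rm dom}(\omega')$, and undefined otherwise.

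For the behaviour, a straightforward induction on path length lifts the one-step correspondence to arbitrary paths: $(p,p')\chemin{a/c}_{T\compose T'}(q,q')$ iff there is $b\in B^*$ with $p\chemin{a/b}_T q$ and $p'\chemin{b/c}_{T'} q'$. Combining this with the terminal clause yields both inclusions of $\InfSup{{\cal T}\compose{\cal T}'}=\InfSup{\cal T}\compose\InfSup{{\cal T}'}$: any accepting run of ${\cal T}$ on $u$ produces some $v\cdot w$ with $w\in\omega(q)$; the word $vw$ is then read deterministically by ${\cal T}'$ from $i'$; and splitting this second computation into the transitions that consume $v$ and the terminal completion that consumes $w$ is exactly what $\omega\compose\omega'$ records. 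The only delicate point is this bookkeeping of outputs at the boundary between the two layers; input-determinism of $T'$ guarantees that the suffix $\omega(q)$ has a unique continuation in ${\cal T}'$, so no new idea is needed beyond this observation.
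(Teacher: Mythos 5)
Your proof is correct and is essentially the argument the paper intends: the paper itself states this result as a citation of Sch\"utzenberger and only displays the composition construction \,${\cal T}\,\compose\,{\cal T}'$ \,without verifying it, and your check of the three points (unique initial state, input-determinism of \,$T\,\compose\,T'$ \,via determinism of \,$T$ \,then of \,$T'$, and the path-level correspondence including the terminal clause) is exactly the standard verification that construction requires. The only cosmetic slip is writing \,$w \in \omega(q)$\,: for a sequential transducer \,$\omega(q)$ \,is a single word, not a language, which in fact makes the well-definedness of \,$\omega\,\compose\,\omega'$ \,you argue for even more immediate.
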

Note that the previous functions \,$f_1$ \,and \,$f_2$ \,are sequential 
whereas \,$f_1 \cap f_2$ \,and \,$f_1 - f_2$ \,are not rational.

\section{Prefix and suffix sequential transducers}

An easy way to realize functions is to use input-deterministic letter-to-letter 
transducers. In the following, we consider two particular forms of transducers
composed by letter-to-letter sequential transducers with additional 
\,$\varepsilon$-input and \,$\varepsilon$-output transitions on the left or on 
the right.

We first recall letter-to-letter transducers. 
A {\it synchronous transducer} \,is a transducer \,${\cal T} \,= \,(Q,I,F,T)$ 
\,over \,$(A,B)$ \,whose transition set \,$T$ \,is {\it letter-to-letter} 
\,{\it i.e.} \,$T \,\subseteq \,Q{\croix}A{\croix}B{\croix}Q$\,; \,it realizes 
a {\it synchonous relation} \,which is {\it length-preserving}\,: \,for any 
\,$(u,v) \in \InfSup{\cal T}$, $|u| = |v|$. 
A transducer which is both sequential and synchronous is a 
{\it synchronous sequential transducer} \,(or a Mealy machine \cite{Me}) 
\,{\it i.e.} \,of the form \,$(Q,i,F,T)$ \,where 
\,$T$ \,is letter-to-letter and input-deterministic. 
Note that the function \,$\{\ (aa,aa)\,,\,(ab,ba)\}$ \,is synchronous and 
sequential but it cannot be realized by a synchronous sequential transducer.\\
A {\it prefix sequential transducer} \,is an input-deterministic 
letter-to-letter transducer followed by \,$\varepsilon$-input transitions via 
the terminal function \,{\it i.e.} \ a sequential transducer 
\,${\cal T} \,= \,(A,i,\omega,T)$ \,where \,$T$ \,is letter-to-letter. 
It realizes a {\it prefix sequential function} \,which is 
{\it length-increasing}\,: \,$|u| \leq |v|$ \,for any 
\,$(u,v) \in \InfSup{\cal T}$.\\
A {\it suffix sequential transducer} \,is an input-deterministic 
letter-to-letter transducer preceded by \,$\varepsilon$-output transitions 
\,{\it i.e.} \ a sequential transducer \,${\cal T} \,= \,(A,i,F,T)$ \,having a 
terminal function reduced to a final state set \,$F$ \,and a transition set 
\,$T \,\subseteq \,Q{\croix}A{\croix}(B \cup \{\varepsilon\}){\croix}Q$ 
\,which is input-deterministic and of {\it initial} 
\,$\varepsilon$-{\it output} \,meaning that 
\,$\varepsilon$-{\it output transitions} \,$\fleche{\cdot/\varepsilon}$ \,can 
only be at the beginning of paths: 
\ $\fleche{\cdot/b}_T\,\fleche{\cdot/\varepsilon}_T \ \ \Longrightarrow \ \ 
b = \varepsilon$. It realizes a {\it suffix sequential function} \,which is 
{\it length-decreasing}\,: \,$|u| \geq |v|$ \,for any 
\,$(u,v) \in \InfSup{\cal T}$. Here is an illustration of these two particular 
forms of synchronized sequential transducers.
\begin{center}
\includegraphics{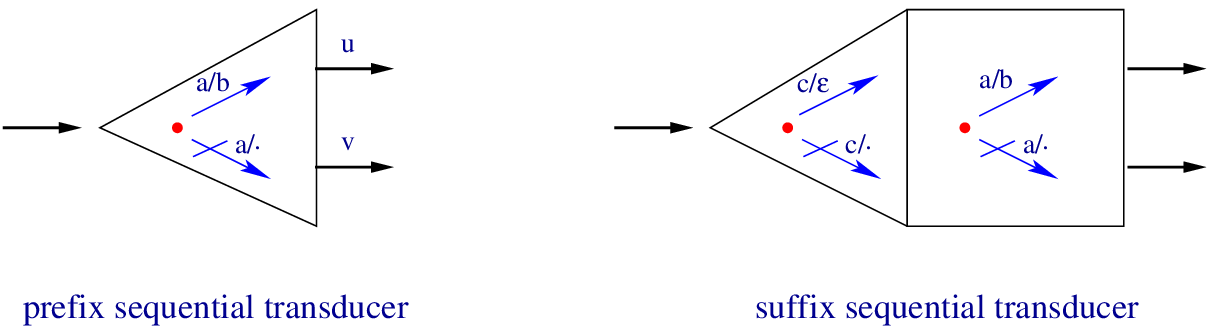}\\
\end{center}
The composition of sequential transducers preserves the prefixity and the 
suffixity.
\begin{proposition}\label{PrefixSuffixSequential}
The prefix and the suffix sequential functions are preserved in quadratic time 
and space under composition, intersection and difference.
\end{proposition}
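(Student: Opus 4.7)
The plan is to build, for each of the three operations, a product-style construction on the two input transducers that preserves the prefix (resp.\ suffix) form and produces a result of size $O(|Q_1|\cdot|Q_2|)$. Since the three constructions share the same skeleton, I would treat the composition case first and then adapt it for intersection and difference.

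For the composition of two prefix sequential transducers ${\cal T} = (Q,i,\omega,T)$ and ${\cal T}' = (Q',i',\omega',T')$, I would specialize the general sequential composition given just before Proposition~\ref{Schutzenberger}. Because $T$ is letter-to-letter, every joint transition $(p,p')\ \fleche{a/c}\ (q,q')$ comes from a unique pair $p\ \fleche{a/b}_T\ q$ and $p'\ \fleche{b/c}_{T'}\ q'$, so the product transition set is letter-to-letter and has size $O(|Q|\cdot|Q'|\cdot|A|)$. The new terminal word at $(q,q')$ is obtained by feeding the finite word $\omega(q)$ into ${\cal T}'$ starting at $q'$ until a final state $q''$ is reached, and concatenating $\omega'(q'')$. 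The suffix case is symmetric: the initial $\varepsilon$-output phases of ${\cal T}$ and ${\cal T}'$ are merged into a single initial phase of the composition.

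For intersection, I would take the product on both input and output letters: a transition $(p_1,p_2)\ \fleche{a/b}\ (q_1,q_2)$ exists exactly when $p_1\ \fleche{a/b}_{T_1}\ q_1$ and $p_2\ \fleche{a/b}_{T_2}\ q_2$ hold with the \emph{same} letter $b$, and $(q_1,q_2)$ is terminal with the common word $\omega_1(q_1) = \omega_2(q_2)$ whenever both are defined and equal. For the difference $f_1 - f_2$, I would first input-complete ${\cal T}_2$ with a non-accepting sink $\bot$, then enlarge the product state with a Boolean flag recording whether the letter-to-letter outputs of ${\cal T}_1$ and ${\cal T}_2$ have already diverged. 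A triple $(q_1,q_2,\mathrm{flag})$ is terminal with output $\omega_1(q_1)$ iff $\omega_1(q_1)$ is defined and either the flag is true, $q_2 = \bot$, $q_2$ is not final in ${\cal T}_2$, or $\omega_2(q_2) \neq \omega_1(q_1)$. Both constructions fit in $O(|Q_1|\cdot|Q_2|)$ time and space.

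The part I expect to require the most care is the suffix variant of the difference: the initial $\varepsilon$-output phases of ${\cal T}_1$ and ${\cal T}_2$ may have different lengths, so one transducer can still be absorbing input letters while the other has already switched to its letter-to-letter phase. The product must therefore track which of the two is in which phase and only start comparing output letters once both have entered the synchronous phase, while keeping all $\varepsilon$-output transitions at the beginning of paths. This bookkeeping stays within a constant factor, preserving the $O(|Q_1|\cdot|Q_2|)$ bound, but it is the place where one must verify that the resulting transducer still satisfies the ``initial $\varepsilon$-output only'' requirement of the suffix form.
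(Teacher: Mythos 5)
Your proposal follows essentially the same route as the paper's proof: the same specialized product construction for composition (with the terminal word obtained by running $\omega(q)$ through ${\cal T}'$), the same synchronization product with equality of terminal words for intersection, and the same ``run in parallel until divergence'' idea for difference --- the paper realizes the divergence flag by switching from $Q{\croix}Q'$ to $Q$ alone rather than by a Boolean component and a sink, but this is only a cosmetic variation with the same quadratic bound. Your closing remark about the suffix difference is exactly the delicate point the paper also addresses (preservation of the initial $\varepsilon$-output condition), and your phase-tracking argument resolves it correctly.
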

Encoding integers by words, we use these synchronized sequential transducers 
to describe functions on integers.

\section{Automaticity of functions on integers}

{\indent}The purpose of automaticity is to describe objects by deterministic 
finite automata. Taking an integer base to code integers by words, we consider
the automaticity of functions on integers using sequential transducers. 
Then we refine the automaticity using prefix or suffix sequential transducers.\\

Let an integer \,$a > 1$ \,and \,$\downarrow_a \,= \,\{0,\ldots,a-1\}$ \,be 
the alphabet \,of its {\it digits}.\\
Any word \,$u \in \,\downarrow_a^*$ \,is a {\it representation in base} 
\,$a$ \,of the integer \,$[u]\mBAS{\,a}$ \,defined by:\\[0.25em]
\hspace*{3em}$[c_n{\ldots}c_0]\mBAS{\,a} \ = \ \sum_{i = 0}^nc_ia^i$ \ for any 
\,$n \geq 0$ \,and \,$c_0,\ldots,c_n \in \,\downarrow_a$\\[0.25em]
where the least significant digit is to the right. 
Any word \,$u \in \,\downarrow_a^*$ \,is also a 
{\it reverse representation in base} \,$a$ \,of the integer \,$\mBAS{a}[u]$ 
\,defined by:\\[0.25em]
\hspace*{3em}$\mBAS{a}[c_0{\ldots}c_n] \ = \ \sum_{i = 0}^nc_ia^i$ \ for any 
\,$n \geq 0$ \,and \,$c_0,\ldots,c_n \in \,\downarrow_a$\\[0.25em]
where the least significant digit is to the left. 
Thus \,$\mBAS{a}[u] \,= \,[\widetilde{u}]\mBAS{\,a}$ \,for any 
\,$u \in \,\downarrow_a^*$\,.\\
Representations of integers are extended to relations.\\
A relation \,$R \,\subseteq \,\downarrow_a^*{\croix}\downarrow_a^*$ 
\,is a representation (resp. reverse representation) in base~\,$a$ \,of the 
binary relation \,$[R]\mBAS{\,a}$ \,(resp. $\!\mBAS{a}[R]$) on \,$\entier$\,:
\\[0.25em]
\hspace*{0.5em}$[R]\mBAS{\,a} \ = \ \{\ ([u]\mBAS{\,a\,},[v]\mBAS{\,a})\ |\ 
(u,v) \in R\ \}$ \ \ and \ \ 
$\mBAS{a}[R] \ = \ \{\ (\mBAS{\,a}[u]\,,\mBAS{\,a}[v])\ |\ (u,v) \in R\ \}$.
\\[0.25em]
For \,$\widetilde{R} \,= \,\{\ (\widetilde{u},\widetilde{v})\ |\ (u,v) \in R\ 
\}$ \,the mirror of \,$R$, we have 
\,\,$\mBAS{a}[R] \,= \,[\widetilde{R}]\mBAS{\,a}$\,.\\
Then, functions on integers can be realized by sequential transducers.
\begin{definition}\label{Automatic}
A function \,$f : \entier\ \fleche{}\ \entier$ \,is {\it automatic} (resp. 
{\it reverse automatic}) if there exists an integer base \,$a > 1$ \,and 
a sequential transducer \,${\cal T}$ \,over \,$\downarrow_a$ \,such that 
\,$f \,= \,[{\rm R}({\cal T})]\mBAS{\,a}$ 
(resp. \,$f \,= \mBAS{\,a}[{\rm R}({\cal T})]$\,).
\end{definition}
We refine the automaticity by using prefix or suffix sequential transducers. 
We say that a function \,$f : \entier\ \fleche{}\ \entier$ \,is 
{\it prefix/suffix automatic} (resp. {\it reverse prefix/suffix automatic}) if 
there exists an integer base \,$a > 1$ \,and a prefix/suffix sequential 
transducer\,~${\cal T}$ \,over \,$\downarrow_a$ \,such that 
\,$f \,= \,[{\rm R}({\cal T})]\mBAS{\,a}$ 
(resp. \,$f \,= \mBAS{\,a}[{\rm R}({\cal T})]$).\\
In the following, we study the prefix and suffix automaticity of the mappings 
\,$f_{a,b,d} : \entier\ \fleche{}\ \entier$ \,for all natural numbers 
\,$a,b,d$ \,with \,$d \neq 0$ \,defined for any integer \,$n \geq 0$ \,by
\\[0.5em]
\hspace*{6em}{$f_{a,b,d}(n)\ =\ 
\left\{\begin{tabular}{ll}
$\frac{n}{d}$ & \ if \ \ $n$ \,is a multiple of \,$d$,\\[0.25em]
$an + b$ & \ otherwise.
\end{tabular}\right.$}\\[0.25em]
Note that \,$f_{3,1,2}$ \,is the Collatz function.

\section{Suffix automaticity of functions \,${\bf f_{a,b,d}}$}

{\indent}To realize the functions \,$f_{a,b,d}$ \,with sequential transducers, 
a natural and simple way is to take the base \,$d$ \,to perform by shift the 
division by \,$d$. Taking the least significant digit to the left allows to 
test the multiplicity by \,$d$ \,and, if not, to perform the multiplication by 
\,$a$ \,and the addition of \,$b$. 
Simple suffix transducers will be sufficient to describe these operations. 
We therefore recall the realization of the multiplication by \,$a$ \,by a 
synchronous sequential transducer.\\

The function \,$f_{a,b,1}$ \,is the identity mapping which can be realized in 
any base by the trivial synchronous sequential transducer with only one state 
and the labels of the loops are all pairs of identical digits. 
In this section, we assume that \,$d > 1$. 
For any natural number \,$a$, we realize the function \,$n\ \mapsto\ an$ \,in 
base~\,$d$ \,with the least significant digit to the left by the synchronous 
sequential transducer 
\,$_{d,a}* \,= \,(\downarrow_{a\,},\{0\},\{0\},_{\,d,a}\!\croix)$ \,where\\[0.25em]
\hspace*{3em}$i\ \fleche{b/c}_{_{d,a}\croix}\ j$ \ \ if  \ 
$ab + i \,= \,c + d\,j$ \ for any \,$i,j \in \,\downarrow_{a}$ \,and 
\,$b,c \in \,\downarrow_{d}$\,.\\
Here is an illustration of the transducer \,$_{2,4}*$ \,realizing a reverse 
representation in base $2$ of the multiplication by \,$4$\,:
\begin{center}
\includegraphics{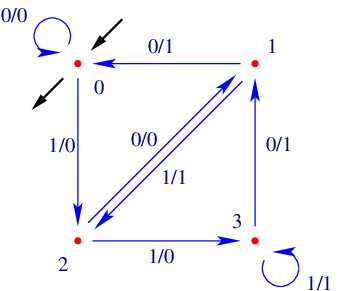}\\
\end{center}
The multiplication defined on digits by transitions extends to the 
multiplication of numbers by paths. This property holds even for transitions 
\,$i\ \fleche{b/c}_{_{d,a}\croix}\ j$ \,where \,$i$ \,and \,$j$ \,are not 
necessarily lower than the multiplicator \,$a$.
\begin{lemma}\label{PathMult}
For any \,$i,j \geq 0$ \,and \,$u,v \in \,\downarrow_d^*$\,, \,we have\\[0.25em]
\hspace*{6em}$i\,\ \chemin{u/v}_{_{d,a}\croix}\ j \ \ \ \Longleftrightarrow \ \ \ 
\mBAS{d}[u]a + i \,= \,\mBAS{\,d}[v] + j\,d^{|u|}$ \,and \ $|u| = |v|$.
\end{lemma}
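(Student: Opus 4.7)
The plan is to prove both implications together by induction on \,$n = |u|$\,, viewing the equation \,$a{\cdot}\mBAS{d}[u] + i \,= \,\mBAS{d}[v] + j\,d^{|u|}$ \,as an invariant maintained along paths of the transducer. For the base case \,$n = 0$\,, both \,$u$ \,and \,$v$ \,must equal \,$\varepsilon$\,, the only path is the trivial \,$i\ \chemin{\varepsilon/\varepsilon}\ i$\,, and the equation collapses to \,$i = j$\,, which matches exactly.

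For the inductive step I would factor \,$u = bu'$ \,and \,$v = cv'$ \,with \,$b,c$ \,the leftmost (least significant) digits, and exploit the identity \,$\mBAS{d}[bw] = b + d\,\mBAS{d}[w]$\,. In the forward direction, any path \,$i\ \chemin{bu'/cv'}\ j$ \,factors uniquely as \,$i\ \fleche{b/c}\ k\ \chemin{u'/v'}\ j$ \,for some intermediate state \,$k$\,; combining the transition equation \,$ab + i = c + dk$ \,with the inductive hypothesis \,$a\,\mBAS{d}[u'] + k = \mBAS{d}[v'] + j\,d^n$ \,(multiplied through by \,$d$ \,and added) produces the desired identity at length \,$n+1$\,.

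For the reverse direction I would reduce \,$a\,\mBAS{d}[u] + i = \mBAS{d}[v] + j\,d^{n+1}$ \,modulo~\,$d$\,: since \,$\mBAS{d}[u] \equiv b$ \,and \,$\mBAS{d}[v] \equiv c$ \,mod~\,$d$\,, one reads off \,$ab + i \equiv c\ {\rm mod}\ d$\,, which together with \,$c \in \,\downarrow_d$ \,uniquely pins down \,$c$ \,and \,$k = (ab+i-c)/d \geq 0$ \,forming a legal transition \,$i\ \fleche{b/c}\ k$\,. Substituting this transition equation back and cancelling a factor of \,$d$ \,leaves exactly \,$a\,\mBAS{d}[u'] + k = \mBAS{d}[v'] + j\,d^n$\,, whereupon the inductive hypothesis supplies the path \,$k\ \chemin{u'/v'}\ j$\,, and hence \,$i\ \chemin{u/v}\ j$\,. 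The only subtle point is that the lemma permits \,$i,j$ \,beyond~\,$a$\,, so I implicitly use the extended transition relation given by the same congruence \,$ab + i = c + dj$ \,with \,$c \in \,\downarrow_d$\,, for which the argument is insensitive; there is no real obstacle beyond careful bookkeeping of the least-significant-digit convention.
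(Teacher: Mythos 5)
Your proposal is correct and follows essentially the same route as the paper's proof: induction on $|u|$ in both directions, peeling off the least significant digits $b,c$, using the transition equation $ab+i=c+dk$ in the forward direction, and in the converse determining $c$ and the intermediate state $k$ from the Euclidean division of $ab+i$ by $d$ (your mod-$d$ reduction is the same step phrased differently). Your remark about allowing $i,j$ beyond $a$ matches the paper's own observation preceding the lemma, so there is nothing to add.
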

Applying Lemma~\ref{PathMult}, the synchronous sequential transducer 
\,$_{d,a}*$ \,realizes the multiplication by \,$a$ \,in base \,$d$\,:\\[0.25em]
\hspace*{3em}$\InfSup{_{d,a}*} \,= 
\,\{\ (u,v)\ |\ u,v \in \,\downarrow_d^* \ \wedge \ |u| = |v| \ \wedge 
\ \mBAS{d}[v] \,= \,\mBAS{d}[u]a\ \}$\\[0.25em]
which is a reverse representation in base \,$d$ \,of the function 
\,$n\ \mapsto\ an$.\\
Also by Lemma~\ref{PathMult} and for any \,$b \geq 0$, the synchronous 
sequential transducer\\
\hspace*{3em}$_{d,a,b}* \ = \ (\downarrow_{a,b},\{b\},\{0\},_{\,d,a,b}\!\croix)$ 
\,where\\[0.25em]
\hspace*{3em}$_{\,d,a,b}\croix$ \,is the restriction of 
\,$_{\,d,a}\croix$ \,to \,$\downarrow_{a,b} \,= \,\{0,\ldots,{\rm max}(a-1,b)\}$
\\[0.25em]
realizes the function 
\,$\{\ (u,v)\ |\ u,v \in \,\downarrow_d^* \ \wedge \ |u| = |v| \ \wedge 
\ \mBAS{d}[v] \,= \,\mBAS{d}[u]a + b\ \}$ \,which multiplies by \,$a$ \,and 
adds \,$b$.\\
We give a representation of the transducer \,$_{2,3,7}*$ \,realizing the 
multiplication by~\,$3$ \,and addition of \,$7$ \,in base \,$2$\,:
\begin{center}
\includegraphics{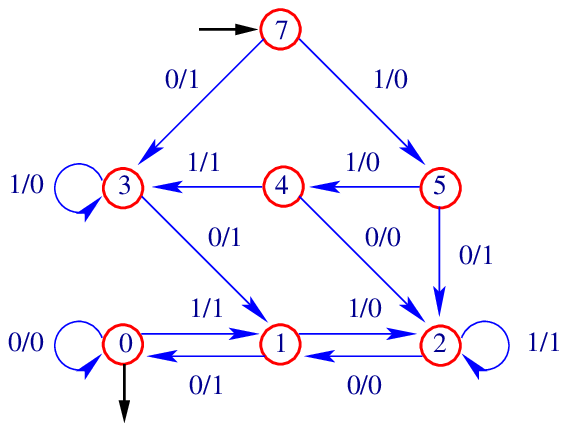}\\
\end{center}
without the state \,$6$ \,which is not accessible from the initial state \,$7$.
\\
We can now present a suffix sequential transducer realizing a reverse 
representation in base~\,$d$ \,of the function \,$f_{a,b,d}$\,.\\
First of all, the function \,$\{\ (dn,n)\ |\ n \in \entier\ \}$ \,has for 
reverse representation in base \,$d$ \,the function 
\,$\{\ (0u,u)\ |\ u \in \,\downarrow_d^*\ \}$ \,which is realized by the 
suffix sequential transducer\\[0.25em]
\hspace*{6em}${\cal D} \,= \,(\{\alpha,\beta\},\{\alpha\},\{\beta\},
\{\alpha\ \fleche{0/\varepsilon}\ \beta\} \cup 
\{\beta\ \fleche{c/c}\ \beta \mid c \in \,\downarrow_d\,\})$.\\[0.25em]
We now give a synchronous sequential transducer to compute a reverse 
representation in base \,$d$ \,of \,$g_{a,b,d} \,= 
\,\{\ (dn+c\,,\,a(dn+c)+b)\ |\ c,n \in \entier \,\wedge \,0 < c < d\ \}$.\\
This is realized by the previous transducer \,$_{d,a,b}*$ \,except that an 
initial transition cannot be of input \,$0$ \,{\it i.e.} \,of the form 
\,$b\ \fleche{0/\cdot}$\,. 
We just have to take a new initial state \,$\alpha$ \,and the following 
synchronous sequential transducer:\\[0.25em]
\hspace*{3em}\begin{tabular}{lrcl}
 & $_{d,a,b}{\cal M}$ & $=$ & 
$(\{\alpha\} \,\cup \downarrow_{a,b}\,,\,\{\alpha\}\,,\,\{0\}\,,\,_{d,a,b}T\,)$
\\[0.25em]
where & $_{d,a,b}T$ & $=$ & 
$\,_{d,a,b}\croix \,\cup \,\{\,\alpha\ \fleche{c/e}\ j \mid c \neq 0 \,\wedge 
\,b\ \fleche{c/e}_{_{\,d,a,b}\!\croix}\ j\,\}$
\end{tabular}\\[0.25em]
which realizes a reverse representation in base \,$d$ \,of \,$g_{a,b,d}$\,. 
Finally, the suffix sequential transducer 
\,${\cal D} \,\cup \,_{d,a,b}{\cal M}$ \,defined by\\[0.25em]
\hspace*{2em}${\large (}\{\alpha,\beta\} \,\cup 
\downarrow_{a,b}\,,\,\{\alpha\}\,,\,\{0,\beta\}\,,\,
_{d,a,b}T \,\cup \{\alpha\ \fleche{0/\varepsilon}\ \beta\} \cup 
\{\beta\ \fleche{c/c}\ \beta \mid c \in \,\downarrow_d\,\}{\large )}$\\[0.25em]
realizes a reverse representation in base \,$d$ \,of \,$f_{a,b,d}$\,.
\begin{proposition}\label{SuffixFabd}
For all \,$a,b \geq 0$ \,and \,$d > 0$, \,$f_{a,b,d}$ \,is reverse suffix 
automatic.
\end{proposition}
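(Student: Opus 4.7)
The plan is to verify that the suffix sequential transducer $\mathcal{D} \,\cup \,_{d,a,b}\mathcal{M}$ constructed immediately before the statement realizes a reverse base-$d$ representation of $f_{a,b,d}$. The degenerate case $d = 1$ is handled separately, since $f_{a,b,1}$ is the identity on $\entier$ and is realized in any base $a > 1$ by the trivial one-state synchronous sequential transducer with loops $q\ \fleche{c/c}\ q$ for every digit $c$. So I may assume $d > 1$, which is the regime in which the construction above was carried out.

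First I would check that $\mathcal{D} \,\cup \,_{d,a,b}\mathcal{M}$ really is a suffix sequential transducer. The only $\varepsilon$-output transition is $\alpha\ \fleche{0/\varepsilon}\ \beta$, and its source is the initial state, so the ``initial $\varepsilon$-output'' condition $\fleche{\cdot/b}\,\fleche{\cdot/\varepsilon} \Longrightarrow b = \varepsilon$ holds vacuously. Input-determinism at $\alpha$ is ensured because $\mathcal{D}$ contributes only the $0$-input transition while $_{d,a,b}\mathcal{M}$ contributes transitions $\alpha\ \fleche{c/\cdot}$ only for $c \neq 0$; at every other state input-determinism is inherited from $\mathcal{D}$ and $_{d,a,b}\mathcal{M}$ separately.

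Then I would split according to the leading digit of an input word $w \in \,\downarrow_d^*$. Since $\mBAS{d}[c_0 c_1 \ldots c_n] \equiv c_0 \pmod d$, the value $\mBAS{d}[w]$ is a multiple of $d$ exactly when $w$ starts with $0$. In that case $w = 0u$ and the unique accepting path $\alpha\ \fleche{0/\varepsilon}\ \beta\ \chemin{u/u}\ \beta$ produces the output $u$ with $\mBAS{d}[u] \,= \,\mBAS{d}[w]/d \,= \,f_{a,b,d}(\mBAS{d}[w])$. Otherwise $w = cu$ with $c \neq 0$ and the unique accepting path has the form $\alpha\ \fleche{c/e}\ j\ \chemin{u/v}\ 0$ in $_{d,a,b}\mathcal{M}$, where $ac + b = e + dj$ by the defining condition of the new initial transitions. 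Applying Lemma~\ref{PathMult} to the subpath $j\ \chemin{u/v}\ 0$ in $_{d,a}\croix$ gives $\mBAS{d}[u]a + j = \mBAS{d}[v]$, and combining yields $\mBAS{d}[ev] \,= \,e + d\mBAS{d}[v] \,= \,e + dj + ad\mBAS{d}[u] \,= \,ac + b + ad\mBAS{d}[u] \,= \,a\mBAS{d}[cu] + b \,= \,f_{a,b,d}(\mBAS{d}[w])$.

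The only subtle point is keeping the parameters of Lemma~\ref{PathMult} aligned with the construction: the initial carry of the multiply-by-$a$ automaton is $b$, which is precisely why the pre-$\alpha$ initial state of $_{d,a,b}*$ is $b$, and why the final state remains $0$ so that every accepting path leaves no residual carry. Once this bookkeeping is checked, the two computations above cover the two branches of the definition of $f_{a,b,d}$, and the desired reverse suffix automaticity follows.
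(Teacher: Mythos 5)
Your proposal is correct and follows the same route as the paper: the paper's proof of this proposition is exactly the construction of \,${\cal D}\,\cup\,_{d,a,b}{\cal M}$ \,given just before the statement, with correctness resting on Lemma~\ref{PathMult} applied to the multiply-by-$a$-and-add-$b$ transducer \,$_{d,a,b}*$, and with the case \,$d=1$ \,dispatched by the trivial identity transducer just as you do. Your case split on the leading digit and the carry computation \,$\mBAS{d}[ev] = e + dj + ad\,\mBAS{d}[u] = a\,\mBAS{d}[cu] + b$ \,merely makes explicit the verification the paper leaves to the reader.
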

Here is an illustration of \,${\cal D} \,\cup \,_{3,1,2}{\cal M}$ \,realizing 
a reverse representation in base~$2$ of the Collatz function:
\begin{center}
\includegraphics{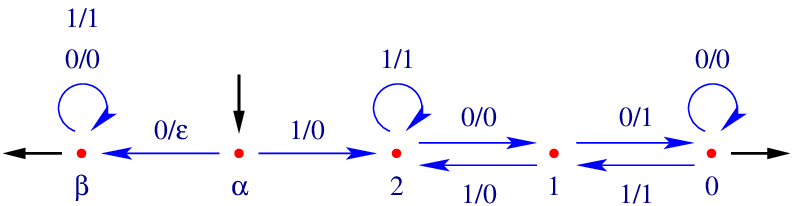}\\
\end{center}
Although this suffix sequential transducer is simple and easy to obtain from 
the definition of the Collatz function, it is tedious to compose it several 
times in order to realize the first powers of the Collatz function. 
Moreover, we are even less able to express in terms of \,$n$ \,the composition 
\,$n$ \,times of this transducer. 
We will see that this becomes possible for any function \,$f_{a,b,d}$ \,using 
prefix sequential transducers but under the restriction that \,$b < a$ 
\,which is a constraint satisfied by the Collatz function.

\section{Prefix automaticity of functions \,${\bf f_{a,b,d}}$}

{\indent}Another way to realize the functions \,$f_{a,b,d}$ \,with sequential 
transducers is to take the base~\,$a$ \,and the least significant digit to the 
right to compute deterministically the division by \,$d$. With this approach 
and for \,$b < a$, we realize \,$f_{a,b,d}$ \,by a prefix sequential 
transducer with \,$d$ \,states.\\

To realize the division by a synchronous sequential transducer, a nice 
prologue has been given in \cite{Sa}. 
Let natural numbers \,$a > 1$, $d > 0$ \,and \,$0 \leq r < d$. 
We realize the function \,$dn + r\ \mapsto\ n$ \,in base \,$a$ \,by just taking 
the inverse and mirror of the previous transducer \,$_{a,d,r}*$ \,for the 
multiplication. We define the synchronous sequential transducer 
\,$/_{a,d,r} \,= \,(\downarrow_{d\,},\{0\},\{r\},:_{a,d})$ \,of the division by 
\,$d$ \,in base \,$a$ \,with remainder \,$r$ \,where\\[0.25em]
\hspace*{3em}$i\ \fleche{b/c}_{:_{a,d}}\ j$ \ \ if \ $ia + b \,= \,cd + j$ \ \ 
for all \,$i,j \in \ \downarrow_{d}$ \,and \,$b,c \in \ \downarrow_{a}$.
\\[0.25em]
Here is a representation of the transducers \,$/_{2,3,2}$ \,and \,$/_{3,2,0}$\,:
\begin{center}
\includegraphics{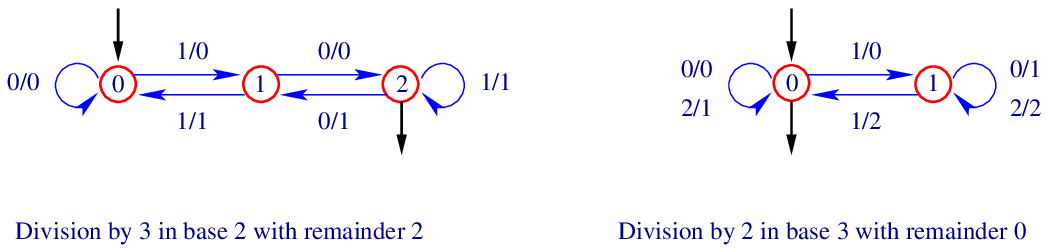}\\
\end{center}
Let us apply Lemma~\ref{PathMult}.
\begin{lemma}\label{CheminDiv}
For all \,$i,j \in \,\downarrow_d$ \,and \,$u,v \in \,\downarrow_a^*$\,, \,we 
have\\[0.25em]
\hspace*{6em}$i\,\ \chemin{u/v}_{:_{a,d}}\ j \ \ \ \Longleftrightarrow \ \ \ 
i\,a^{|u|} + [u]\mBAS{\,a} \,= \,[v]\mBAS{\,a}d + j$ \,and \ $|u| = |v|$.
\end{lemma}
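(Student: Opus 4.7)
My plan is to derive Lemma~\ref{CheminDiv} directly from Lemma~\ref{PathMult}, following the hint of the sentence that precedes the statement. The key observation is that the transition set $:_{a,d}$ coincides, up to relabeling, with the mirror of the inverse of the transition set of ${}_{a,d}*$, which is what Lemma~\ref{PathMult} describes once the roles of $a$ and $d$ are swapped.

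First I would verify this correspondence at the level of a single transition. Reading Lemma~\ref{PathMult} with $a$ and $d$ exchanged, a transition of ${}_{a,d}*$ has the form $p\ \fleche{b/c}\ q$ iff $db + p = c + aq$. Inverting swaps input with output and mirroring reverses the arrow, so the resulting transition is $q\ \fleche{c/b}\ p$ with the same constraint. Renaming the new source $q$ as $i$, the new target $p$ as $j$, the new input $c$ as $b$ and the new output $b$ as $c$, the condition becomes $ai + b = dc + j$, which is exactly the defining rule of $:_{a,d}$. Extending this to paths, $i\ \chemin{u/v}_{:_{a,d}}\ j$ holds iff $j\ \chemin{\widetilde{v}/\widetilde{u}}_{_{a,d}*}\ i$ holds, since the two reversals pair $u$ with $\widetilde{v}$ and $v$ with $\widetilde{u}$.

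Applying Lemma~\ref{PathMult} to the second path yields $\mBAS{a}[\widetilde{v}]\,d + j = \mBAS{a}[\widetilde{u}] + i\,a^{|\widetilde{v}|}$ with $|\widetilde{u}| = |\widetilde{v}|$, and the identity $\mBAS{a}[\widetilde{w}] = [w]\mBAS{\,a}$ recalled in Section~5 rewrites this as $i\,a^{|u|} + [u]\mBAS{\,a} = [v]\mBAS{\,a}\,d + j$ with $|u| = |v|$, which is the claim.

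The most delicate point is the bookkeeping between the two digit conventions in play: ${}_{a,d}*$ processes reverse representations (least significant digit on the left), whereas $:_{a,d}$ processes standard representations (most significant digit on the left). The double reversal induced by inverse and mirror, combined with the bijection $w \leftrightarrow \widetilde{w}$, is what reconciles them, and ensuring that all these reversals compose consistently is where an error would most easily creep in. Once that is handled, no further induction is needed since Lemma~\ref{PathMult} has already carried out that work.
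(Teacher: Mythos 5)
Your proof is correct and follows essentially the same route as the paper, which derives Lemma~\ref{CheminDiv} by "applying Lemma~\ref{PathMult}" to the division transducer obtained as the inverse and mirror of the multiplication transducer (no separate proof appears in the appendix). Your explicit bookkeeping of the transition correspondence, the pairing of \,$u$ \,with \,$\widetilde{v}$, \,and the identity \,$\mBAS{a}[\widetilde{w}] = [w]\mBAS{\,a}$ \,is exactly the intended argument, spelled out.
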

Thus \,$/_{a,d,r}$ \,realizes 
\,$\{\ (u,v)\ |\ u,v \in \,\downarrow_a^* \ \wedge \ |u| = |v| \ \wedge 
\ [u]\mBAS{\,a} \,= \,[v]\mBAS{\,a}d + r\ \}$.
\\[0.25em]
Let us propose a way to visualize these transducers to highlight basic 
symmetries. 
The \,$d$ \,integers of the vertex set \,$\{0,\ldots,d-1\}$ \,of \,$:_{a,d}$
\,are equidistant on a counterclockwise circle in a way that the diameter 
between \,$0$ \,and \,$d-1$ \,is horizontal with \,$0$ \,at the top right. 
Here is a representation for respectively \,$d \,= \,1,2,3,4$\,:
\begin{center}
\includegraphics{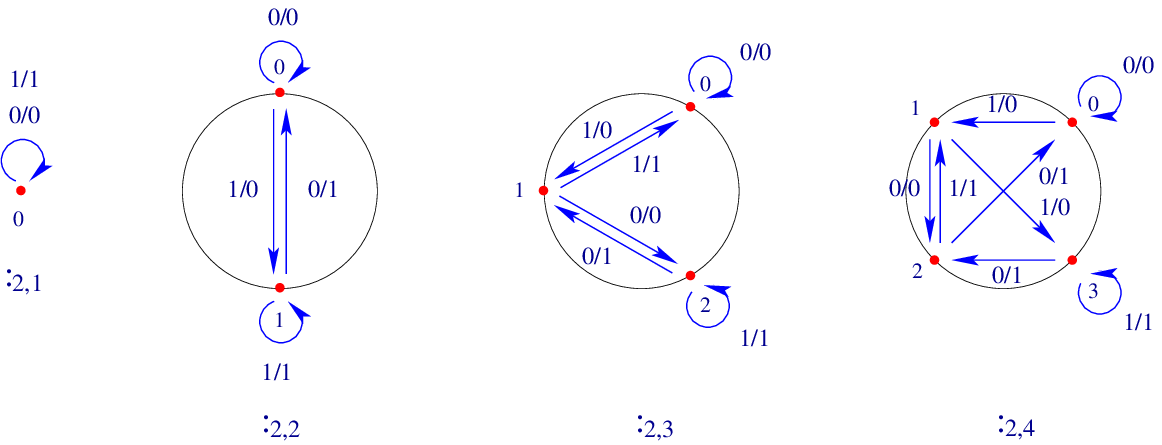}\\
\end{center}
Here is a simple and natural algorithm to draw one by one the transitions of 
the transducer \,$:_{d,a}$ \,realizing the division by \,$d$ \,in base \,$a$.\\
From the Euclidean division of \,$n \,= \,dq + r$, the division of \,$n+1$ \,is
\\[0.25em]
\hspace*{1em}$n+1 \,= \,dq + (r+1)$ \,if \,$r < d-1$ \ \ and \ \ 
$n+1 \,= \,d(q+1)$ \,if \,$r = d-1$.\\
This allows us to incrementally determine \,$:_{a,d}$ \,as follows:\\[0.25em]
\hspace*{5em}$E \,= \,\emptyset$\\
\hspace*{5em}$output \,= \,0$\\
\hspace*{5em}$goal \,= \,0$\\
\hspace*{5em}For all \,$source$ \,from \,$0$ \,to \,$d-1$ \,Do\\
\hspace*{8em}For all \,$input$ \,from \,$0$ \,to \,$a-1$ \,Do\\
\hspace*{11em}add to \,$E$ \,the transition 
\,$source\ \fleche{input/output}\ goal$\\
\hspace*{11em}add \,$1$ \,to \,$goal$\\
\hspace*{11em}If \,$goal = d$ \,Then\\
\hspace*{14em}add \,$1$ \,to \,$output$\\
\hspace*{14em}$goal = 0$\\
\hspace*{11em}EndIf\\
\hspace*{8em}EndFor\\
\hspace*{5em}EndFor\\
\hspace*{5em}return \,$E$\\[0.25em]
This algorithm processes the \,$d$ \,vertices in increasing order.
For each vertex \,$i$, it determines its \,$a$ \,transitions by increasing 
input \,$b$. 
The goal \,$j$ \,is the number \,$ia + b$ \,of transitions already given  
modulo \,$d$, and the output \,$c$ \,is its quotient by \,$d$, 
which is justified by the equality \,$ia + b \,= \,cd + j$. 
Here is an illustration of this construction process.
\begin{center}
\includegraphics{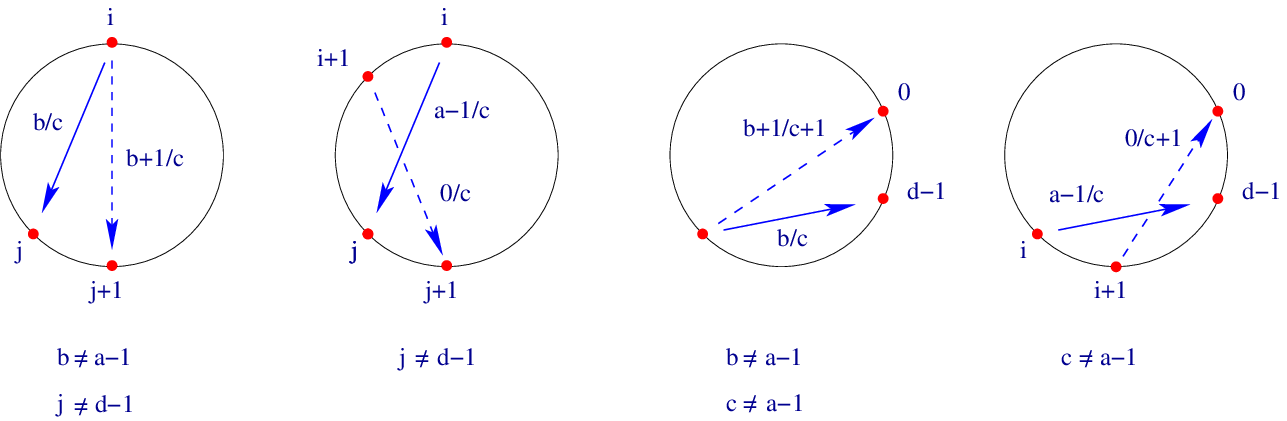}\\
\end{center}
We perform a single turn for the sources and \,$a$ \,turns for the goals. 
We begin with the loop \,$0\ \fleche{0/0}\ 0$ \,and we end with the loop 
\,$d-1\ \fleche{a-1/a-1}\ d-1$.\\
It follows the illustration below of the Euclidean division \,$:_{5,8}$ \,by 
\,$8$ \,in base \,$5$\,:
\begin{center}
\includegraphics{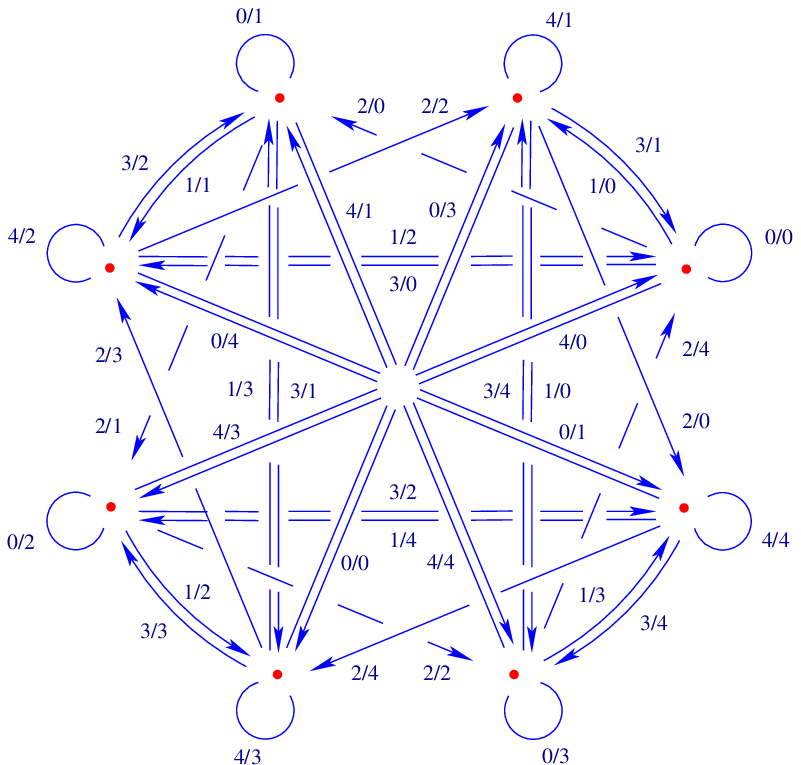}\\
\end{center}
The disposition of the vertices is useful to identify, among others, properties 
of symmetry for these transducers.\\
We have to give a prefix transducer realizing a representation in base \,$a$ 
\,of the function \,$f_{a,b,d}$ \,for \,$b < a$. 
We start with \,$d = 2$ \,namely the functions \,$f_{a,b}$ \,defined for any
integer \,$n \geq 0$ \,by\\[0.5em]
\hspace*{9em}{$f_{a,b}(n)\ =\ 
\left\{\begin{tabular}{ll}
$\frac{n}{2}$ & \ if \ \ $n$ \,is even,\\[0.25em]
$an + b$ & \ otherwise.
\end{tabular}\right.$}\\[0.25em]
In the case where \,$a$ \,and \,$b$ \,are of the same parity, we also consider 
the function\\[0.5em]
\hspace*{9em}{$f'_{a,b}(n)\ =\ 
\left\{\begin{tabular}{ll}
$\frac{n}{2}$ & \ if \ \ $n$ \,is even,\\[0.25em]
$\frac{an + b}{2}$ & \ otherwise
\end{tabular}\right.$}\\[0.25em]
which is an {\it acceleration} \,of \,$f_{a,b}$\,. 
A transducer realizing \,$f'_{a,b}$ \,can be obtained from the transducer 
\,$/_{a,2,0}$ \,of the division by \,$2$ \,in base \,$a$ \,with an additional 
terminal function. If an initial path ends to the state \,$0$, the input is a 
representation in base \,$a$ \,of an even integer \,$n$ \,and the output 
represents \,$\frac{n}{2} \,= \,f'_{a,b}(n)$. 
If an initial path ends to the state \,$1$, the input represents an odd 
integer \,$n$ \,and the output represents \,$\frac{n-1}{2}$. So the digit of 
the terminal function must be \,$\frac{a+b}{2}$ \,to get 
\,$a\,\frac{n-1}{2} \,+ \,\frac{a+b}{2} \,= \,\frac{an+b}{2} \,= \,f'_{a,b}(n)$.
\begin{proposition}\label{Transducf'}
For all \,$0 \leq b < a$ \,with \,$a > 1$ \,and \,$a,b$ \,of the same 
parity,\\[0.25em]
\hspace*{1em}${\cal G}'_{a,b} \,= \,(\{0,1\},\{0\},\omega'_{a,b}\,,:_{a,2})$ \ 
with \ 
$\omega'_{a,b}(0) = \varepsilon$ \,and \,$\omega'_{a,b}(1) = \frac{a+b}{2}$
\\[0.25em]
is a prefix sequential transducer realizing a representation in base \,$a$ \,of 
\,$f'_{a,b}$\,.
\end{proposition}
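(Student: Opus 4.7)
The plan is to verify first that $\mathcal{G}'_{a,b}$ is well-formed and structurally a prefix sequential transducer, and then to compute its input/output relation by invoking Lemma~\ref{CheminDiv}.

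First I would check that the terminal function is well-defined as a word over $\downarrow_a$. Since $a$ and $b$ have the same parity, $a+b$ is even, so $\frac{a+b}{2} \in \entier$; and since $0 \leq b < a$, we have $\frac{a+b}{2} < \frac{2a}{2} = a$, so $\frac{a+b}{2}$ is a single digit in $\downarrow_a$. Structurally, $:_{a,2}$ is letter-to-letter and input-deterministic by construction of the synchronous sequential transducer $/_{a,2,0}$, and $\omega'_{a,b}$ assigns to each state a word in $\downarrow_a^*$. Hence $\mathcal{G}'_{a,b}$ fits the definition of a prefix sequential transducer.

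Next I would fix an input $u \in \,\downarrow_a^*$ and let $n = [u]\mBAS{\,a}$. Because $:_{a,2}$ is input-complete and input-deterministic, there is a unique path $0 \cheminInd{u/v}{:_{a,2}} j$ with $j \in \{0,1\}$. By Lemma~\ref{CheminDiv} applied with $i = 0$, this path satisfies $|u|=|v|$ and
\[
\,[u]\mBAS{\,a} \ = \ [v]\mBAS{\,a}\cdot 2 + j, \quad\text{i.e.}\quad n \,= \,2\,[v]\mBAS{\,a} + j.
\]
Thus $j$ is the parity of $n$ and $[v]\mBAS{\,a} = \lfloor n/2 \rfloor$.

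I would then split on the two cases governed by the terminal function. If $n$ is even, then $j=0$, the path is accepting with $\omega'_{a,b}(0)=\varepsilon$, so the full output is $v$ and represents $\frac{n}{2} = f'_{a,b}(n)$. If $n$ is odd, then $j=1$, the full output is $v\cdot c$ with $c=\frac{a+b}{2}$, and
\[
\,[vc]\mBAS{\,a} \ = \ [v]\mBAS{\,a}\cdot a + c \ = \ \frac{n-1}{2}\cdot a + \frac{a+b}{2} \ = \ \frac{an+b}{2} \ = \ f'_{a,b}(n).
\]
In either case $[\,\InfSup{\mathcal{G}'_{a,b}}\,]\mBAS{\,a}(n) = f'_{a,b}(n)$, and conversely every $n \geq 0$ arises from some $u$, so $\mathcal{G}'_{a,b}$ realizes a representation in base $a$ of $f'_{a,b}$.

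There is no real obstacle; the only subtlety is checking that $\frac{a+b}{2}$ lies in $\downarrow_a$, which is exactly where the hypothesis $b < a$ (beyond the parity assumption) is used. The rest is a direct application of Lemma~\ref{CheminDiv} and the base-$a$ identity $[vc]\mBAS{\,a} = [v]\mBAS{\,a}\cdot a + c$.
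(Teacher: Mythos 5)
Your proof is correct and follows essentially the same route as the paper's: unique path $0\ \chemin{u/v}_{:_{a,2}}\ j$ by input-determinism and input-completeness, Lemma~\ref{CheminDiv} to get $[u]\mBAS{\,a} = 2[v]\mBAS{\,a}+j$, and a case split on the parity $j$. The only (harmless) difference is in the odd case: the paper observes that $1\ \fleche{b/c}_{:_{a,2}}\ 0$ and re-applies the path characterization to $0\ \chemin{ub/vc}\ 0$, whereas you compute $[vc]\mBAS{\,a} = a[v]\mBAS{\,a}+c$ directly; your explicit check that $\frac{a+b}{2} \in\ \downarrow_a$ is a welcome addition that the paper only makes in passing.
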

\begin{proof}
As \,$:_{a,2}$ \,is input-deterministic and input-complete, for all 
\,$u \in \,\downarrow_a^*$\,, there exists a unique \,$v \in \,\downarrow_a^*$ 
\,and \,$j \in \{0,1\}$ \,such that \,$0\ \chemin{u/v}_{:_{a,2}}\ j$.\\
By Lemma~\ref{CheminDiv}, we have 
\,$[u]\mBAS{\,a} \,= \,2\,[v]\mBAS{\,a} + j$.\\
For \,$j = 0$, \,$[u]\mBAS{\,a}$ \,is even and 
\,$[v]\mBAS{\,a} \,= \,\frac{[u]\mBAS{\,a}}{2} \,= \,f'_{a,b}([u]\mBAS{\,a})$.\\
For \,$j = 1$, \,$[u]\mBAS{\,a}$ \,is odd and as 
\,$1\ \fleche{b/c}_{:_{a,2}}\ 0$ \,for \,$c \,= \,\frac{a+b}{2} \,< \,a$, 
\,$0\ \chemin{ub/vc}_{:_{a,2}}\ 0$ \,thus\\[0.25em]
\hspace*{6em}$a\,[u]\mBAS{\,a} + b \,= \,[ub]\mBAS{\,a} \,= 
\,2\,[vc]\mBAS{\,a}$ \ \ hence \ \
$[vc]\mBAS{\,a} \,= \,f'_{a,b}([u]\mBAS{\,a})$.
\cqfd
\end{proof}
Denoting by \,$c \,/_{\rm\!e}$ \,when \,$c$ \,is even, and by \,$c \,/_{\rm\!o}$
\,when \,$c$ \,is odd, we can describe the transducer \,${\cal G}'_{a,b}$ 
\,according to the parity of \,$a$ \,as follows:\\[0.5em]
\hspace*{2em}\includegraphics{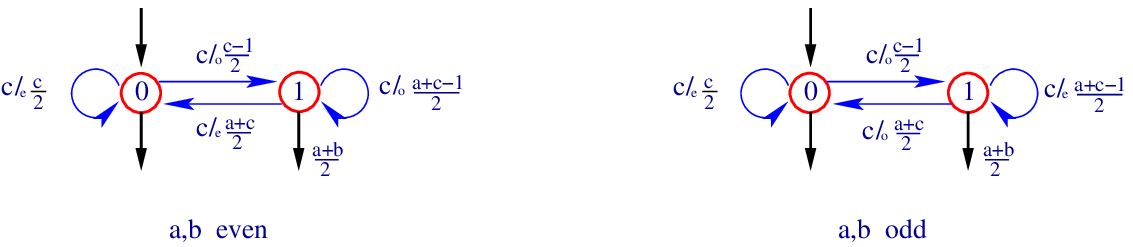}\\[0.5em]
So \,${\cal G}'_{3,1}$ \,realizes in base \,$3$ \,the acceleration \,$f'_{3,1}$ 
\,of the Collatz function\,:
\begin{center}
\includegraphics{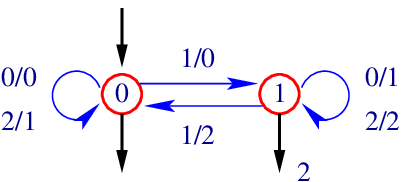}\\
\end{center}
Thus \,${\cal G}'_{6,2}$ \,realizes in base \,$6$ \,the Collatz function 
\,$f_{3,1} \,= \,f'_{6,2}$\,:
\begin{center}
\includegraphics{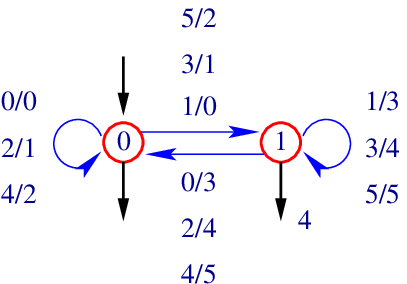}\\
\end{center}
More generally for any \,$0 \leq b < a$, the function 
\,$f_{a,b} \,= \,f'_{2a,2b}$ \,is represented in base \,$2a$ \,by the prefix 
sequential transducer \,${\cal G}'_{2a,2b}$\,. 
This synchronizability extends for any \,$f_{a,b,d}$\,. 
We define a prefix sequential transducer realizing \,$f_{a,b,d}$ \,for \,$b < a$ 
\,from the transducer computing the division by \,$d$ \,in base \,$ad$. 
If an initial path ends to the state \,$j$, the input represents an integer 
\,$n$ \,multiple of \,$d$ \,plus \,$j$ \,and the output represents 
\,$\frac{n-j}{d}$. The final digit in \,$j \neq 0$ \,is \,$aj+b$ \,since 
\,$ad\,\frac{n-j}{d} + aj + b \,= \,an + b \,= \,f_{a,b,d}(n)$.
\begin{theorem}\label{Transducf}
For all \,$0 \leq b < a \neq 1$ \,and \,$d > 0$, the following transducer 
\,${\cal G}_{a,b,d}$\\[0.25em]
\hspace*{0em}
$(\{0,\ldots,d-1\},\{0\},\omega_{a,b}\,,:_{ad,d})$ \,with 
\,$\omega_{a,b}(0) = \varepsilon$, $\omega_{a,b}(j) \,= \,aj + b$ 
\,$\forall \ 0 < j < d$\\[0.25em]
is prefix sequential and realizes a representation in base \,$ad$ \,of 
\,$f_{a,b,d}$\,.
\end{theorem}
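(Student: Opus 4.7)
The plan is to adapt the argument of Proposition~\ref{Transducf'} to arbitrary divisor $d$, using Lemma~\ref{CheminDiv} applied to the base $ad$ with divisor $d$. First I would note that ${\cal G}_{a,b,d}$ is trivially a prefix sequential transducer: its underlying transition set $:_{ad,d}$ is letter-to-letter and, as the transducer of a Euclidean division, input-deterministic and input-complete over the digit alphabet $\downarrow_{ad}$, and we have simply equipped it with a terminal function $\omega_{a,b}$.

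Next, for any $u \in \downarrow_{ad}^*$, input-determinism and input-completeness give a unique $v \in \downarrow_{ad}^*$ and $j \in \{0,\ldots,d-1\}$ with $0 \chemin{u/v}_{:_{ad,d}} j$, and Lemma~\ref{CheminDiv} (in base $ad$, divisor $d$) yields $[u]_{ad} = d\,[v]_{ad} + j$. I would then split on the terminal state $j$.

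If $j = 0$, then $n := [u]_{ad}$ is a multiple of $d$ and $[v]_{ad} = n/d = f_{a,b,d}(n)$; since $\omega_{a,b}(0) = \varepsilon$, the accepted output is exactly $v$, as required. If $0 < j < d$, then $n = d[v]_{ad} + j$ is not a multiple of $d$, so $f_{a,b,d}(n) = an + b$. Here one must first check that $\omega_{a,b}(j) = aj+b$ is a legal single digit in base $ad$: from $j \le d-1$ and $b < a$ we get $aj + b \le a(d-1) + b = ad - (a - b) < ad$. Then the output word is $v(aj+b)$ and its value in base $ad$ is
\[
[v(aj+b)]_{ad} \;=\; ad\,[v]_{ad} + aj + b \;=\; a(d[v]_{ad} + j) + b \;=\; an + b \;=\; f_{a,b,d}(n),
\]
which is the desired equality.

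There is essentially no hard step: the only thing to be careful about is the digit bound $aj+b < ad$, which is exactly where the hypothesis $b < a$ is used; without it the terminal function would not produce a valid base-$ad$ digit and the output would fail to represent $f_{a,b,d}(n)$. Everything else reduces to reading off Lemma~\ref{CheminDiv} and performing the same kind of arithmetic verification as in the proof of Proposition~\ref{Transducf'}, of which the present theorem is the natural generalization from $d = 2$ to arbitrary $d > 0$.
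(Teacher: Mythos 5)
Your proof is correct and follows essentially the same route as the paper's: both reduce to Lemma~\ref{CheminDiv} for the unique path $0\ \chemin{u/v}\ j$ giving $[u]\mBAS{\,ad} = d\,[v]\mBAS{\,ad} + j$, then treat the cases $j=0$ and $j\neq 0$. The only (harmless) difference is that the paper obtains the final equality by appending the transition $j\ \fleche{bd/(aj+b)}\ 0$ and reapplying the lemma, whereas you compute $[v(aj+b)]\mBAS{\,ad}$ directly and explicitly verify the digit bound $aj+b<ad$, which the paper leaves implicit.
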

The proof is similar to the one of Proposition~\ref{Transducf'}. 
Let us apply Theorem~\ref{Transducf}.
\begin{corollary}\label{SyncGauche}
For all integers \,$0 \leq b < a$ \,and \,$d > 0$, \,$f_{a,b,d}$ \,is prefix 
automatic.
\end{corollary}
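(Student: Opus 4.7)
The plan is to derive the corollary directly from Theorem~\ref{Transducf}, with a brief side argument for the degenerate case excluded by its hypothesis $a \neq 1$. By the definition of prefix automaticity, I need to exhibit, for each admissible triple $(a,b,d)$ with $0 \leq b < a$ and $d > 0$, an integer base $a' > 1$ together with a prefix sequential transducer over $\downarrow_{a'}$ whose realized relation is a representation in base $a'$ of $f_{a,b,d}$.

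For $a \geq 2$ and any $d > 0$, Theorem~\ref{Transducf} already supplies such a witness: the transducer \,${\cal G}_{a,b,d}$ \,is prefix sequential over \,$\downarrow_{ad}$ \,and realizes a representation in base \,$ad \geq 2$ \,of \,$f_{a,b,d}$. Prefix automaticity is then immediate by Definition~\ref{Automatic} (refined to the prefix case). So the whole content of the corollary in this range is just an unpacking of the theorem's conclusion.

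The remaining case is $a = 1$, which under $0 \leq b < a$ forces $b = 0$. When $d \geq 2$, I would observe that the construction of Theorem~\ref{Transducf} goes through verbatim with \,$a = 1$\,: the base \,$ad = d$ \,is already at least \,$2$, the terminal output \,$\omega(j) = aj + b = j$ \,still satisfies \,$j < d = ad$, and the proof (modelled on that of Proposition~\ref{Transducf'}) only uses input-determinism and input-completeness of \,$:_{ad,d}$ \,together with the Euclidean identity of Lemma~\ref{CheminDiv}, none of which depends on \,$a \neq 1$. When \,$d = 1$, the function \,$f_{1,0,1}$ \,is the identity on~\,$\entier$, and the one-state prefix sequential transducer over \,$\downarrow_2$ \,with loops \,$\fleche{0/0}$ \,and \,$\fleche{1/1}$ \,and empty terminal output trivially realizes its representation in base \,$2$.

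I expect no substantive obstacle: the corollary is essentially a repackaging of Theorem~\ref{Transducf} in the vocabulary of Definition~\ref{Automatic}. The only point deserving any care is the edge case \,$a = d = 1$, where the ``natural'' coding base \,$ad = 1$ \,is not admissible; but there the function collapses to the identity and any base \,$> 1$ \,works.
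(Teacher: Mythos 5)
Your proposal is correct and follows essentially the same route as the paper: invoke Theorem~\ref{Transducf} for \,$a \geq 2$ \,and treat the leftover case \,$a = 1$, $b = 0$ \,with a \,$d$-state transducer in base \,$d$ --- indeed, instantiating the theorem's construction at \,$a = 1$ \,yields exactly the delay transducer \,$(\downarrow_d,0,\omega,\{\,i\ \fleche{j/i}\ j\,\})$ \,that the appendix writes down and verifies directly. You are in fact slightly more careful than the paper, whose appendix asserts that base \,$d$ \,works for all \,$d > 0$ \,and thus silently skips the degenerate case \,$d = 1$ \,that you dispatch via the identity transducer in base \,$2$.
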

The generalization of Corollary~\ref{SyncGauche} for all \,$b$ \,remains open.

\section{Prefix sequential transducers for \,${\bf f^{\,n}_{a,b,d}}$}

{\indent}For all natural numbers \,$n$ and for \,$b < a \neq 1$, we can make 
explicit the composition \,$n$ \,times \,${\cal G}^{\,n}_{a,b,d}$ \,of 
\,${\cal G}_{a,b,d}$\,.\\

First of all, we need to compose two Euclidean divisions in the same 
base~\,$a$\,: 
dividing by \,$d$ \,then by \,$d'$ \,corresponds to divide by \,$dd'$. 
Precisely, the relation \,$:_{a,d}\,\compose\,:_{a,d'}$ \,is in bijection
with the relation \,$:_{a,dd'}$ \,by coding any vertex \,$(i,i')$ \,where 
\,$0 \leq i < d$ \,and \,$0 \leq i' < d'$ \,by the integer 
\,$\mBAS{d}[(i,i')] \,= \,i + i'd$.
\begin{lemma}\label{CompDiv}
For all \,$a > 1$ \,and \,$d,d' > 0$, \ 
$\mBAS{d}[\,:_{a,d}\,\compose\,:_{a,d'}]$ \,is equal to \,$:_{a,dd'}$\,.
\end{lemma}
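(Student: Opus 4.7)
The plan is to unfold both transition sets as explicit arithmetic conditions and show they match under the encoding $(i,i') \mapsto i+i'd$. A transition $(i,i') \fleche{b/e} (j,j')$ in $:_{a,d}\,\compose\,:_{a,d'}$ exists exactly when there is some intermediate digit $c \in \,\downarrow_a$ with $i \fleche{b/c}_{:_{a,d}} j$ and $i' \fleche{c/e}_{:_{a,d'}} j'$, i.e.\ with
$ia + b = cd + j$ and $i'a + c = ed' + j'$.
A transition $s \fleche{b/e}_{:_{a,dd'}} t$ exists exactly when $sa + b = e(dd') + t$ with $0 \leq s,t < dd'$.

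For the forward inclusion I would just chain the two equations. Substituting the first into the expansion $(i+i'd)a + b = ia + b + i'da$ gives
$(i+i'd)a + b \,= \,(cd+j) + i'da \,= \,j + d(c + i'a) \,= \,j + d(ed'+j') \,= \,e(dd') + (j+j'd),$
which is precisely the transition condition in $:_{a,dd'}$ with source $i+i'd$ and goal $j+j'd$. Note that $0 \leq i+i'd < dd'$ and $0 \leq j+j'd < dd'$ by construction, so the encoding is a bijection on state sets.

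For the reverse inclusion, given $s = i+i'd$, $t = j+j'd$ and $sa+b = e(dd')+t$, I would recover the missing middle digit $c$ by Euclidean division: write $ia+b = cd + j''$ with $0 \leq j'' < d$. Running the same computation backwards forces $j'' = j$ and $c + i'a = ed' + j'$, so the two needed transitions hold. The only point that requires checking is that $c$ is a legal letter in $\,\downarrow_a$, and this follows from the simple bound $ia+b \leq (d-1)a + (a-1) = ad - 1$, whence $c \leq a-1$.

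The main obstacle, if any, is this bound on the intermediate digit $c$; once it is in hand, the proof is a one-line algebraic rearrangement plus the observation that $(i,i') \mapsto i+i'd$ is a bijection $\,\downarrow_d \times \,\downarrow_{d'} \to \,\downarrow_{dd'}$, so equality of relations follows.
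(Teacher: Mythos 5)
Your proof is correct and follows essentially the same route as the paper's: unfold both transition conditions arithmetically, use the encoding $(i,i')\mapsto i+i'd$, and eliminate the intermediate digit. You are in fact slightly more careful than the paper, which presents the elimination as a chain of equivalences without explicitly verifying that the recovered intermediate digit lies in $\{0,\ldots,a-1\}$ — the bound $ia+b\leq ad-1$ you supply is exactly what that step needs.
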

Thus and for any \,$n \geq 0$ \,and \,$x,y \in \{0,\ldots,d-1\}^n$, we get
\\[0.25em]
\hspace*{9em}$x\ \fleche{b/c}_{(:_{a,d})^n}\ y \ \ \ \Longleftrightarrow \ \ \ 
\mBAS{d}[x]\ \fleche{b/c}_{:_{a,d^n}}\ \mBAS{d}[y]$\\[0.25em]
and we can construct \,$(:_{a,d})^n$ \,from \,$:_{a,d^n}$ \,by renaming 
each integer vertex by the word of its reverse representation of length \,$n$ 
\,in base \,$d$.\\
To realize the function \,$f'^{\,n}_{a,b}$\,, \,we first do the division by 
\,$2^n$ \,and then the numerator is performed by the terminal function: it 
multiplies by \,$a$ \,and adds \,$b$ \,each time an odd number is encountered 
in the orbit of length \,$n$. 
We define for all natural numbers \,$a,b,n,q$ \,the number\\[0.25em]
\hspace*{8em}$\eta_{a,b,n}(q) \ = \ |\{\ 0 \leq i < n\ |\ f'^{\,i}_{a,b}(q)$ 
\,odd\ $\}|$\\[0.25em]
of odd integers among the first \,$n$ \,numbers of the orbit from \,$q$ \,of 
\,$f'_{a,b}$\,. 
Let us give a basic property \cite{Al} satisfied by the iterates of 
\,$f'_{a,b}$\,.
\begin{lemma}\label{formule}
For all natural numbers \,$a,b,n,p,q$ \,with \,$a,b$ \,of same parity,\\[0.25em]
\hspace*{9em}\begin{tabular}{rcl}
$f'^{\,n}_{a,b}(p2^n + q)$ & $\,=\,$ & $p\,a^{\eta_{a,b,n}(q)} + f'^{\,n}_{a,b}(q)$
\\[0.25em]
$\eta_{a,b,n}(p2^n + q)$ & $=$ & $\eta_{a,b,n}(q)$.
\end{tabular}\\
\end{lemma}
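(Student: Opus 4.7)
The plan is to prove both identities simultaneously by a single induction. Rather than inducting on $n$ directly, I would strengthen the statement to a claim indexed by $i$ running from $0$ to $n$: for every $0 \leq i \leq n$,
$$
f'^{\,i}_{a,b}(p2^n + q) \;=\; p\cdot 2^{n-i}\cdot a^{\eta_{a,b,i}(q)} + f'^{\,i}_{a,b}(q),
\qquad
\eta_{a,b,i}(p2^n+q) \;=\; \eta_{a,b,i}(q).
$$
Specializing to $i=n$ yields exactly the lemma. The base case $i=0$ is just the equality $p2^n + q = p\cdot 2^n\cdot a^0 + q$, with the $\eta$ identity trivial.

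For the inductive step from $i$ to $i+1$ (where $i<n$), the key observation is that $p\cdot 2^{n-i}\cdot a^{\eta_{a,b,i}(q)}$ is even because $n-i\geq 1$. Hence $f'^{\,i}_{a,b}(p2^n+q)$ and $f'^{\,i}_{a,b}(q)$ have the same parity, which already proves the second identity at step $i+1$. Then I would split on this common parity. If $f'^{\,i}_{a,b}(q)$ is even, applying $f'_{a,b}$ simply halves both sides, preserving the form with $n-i$ decreased by one and $\eta$ unchanged. If $f'^{\,i}_{a,b}(q)$ is odd, applying the rule $m\mapsto (am+b)/2$ (well defined since $a,b$ have the same parity) multiplies the leading term by $a$ and halves, giving exactly the form at step $i+1$ with $\eta_{a,b,i+1}(q)=\eta_{a,b,i}(q)+1$.

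The only subtle point—and what I expect to be the main thing to verify carefully—is that the two cases must be computed so that the prefactor $p\cdot 2^{n-i-1}\cdot a^{\eta_{a,b,i+1}(q)}$ falls out cleanly after halving, and that the constant term reassembles as $f'^{\,i+1}_{a,b}(q)$ via the defining formula for $f'_{a,b}$. Explicitly, in the odd case the computation
$$
\frac{a\bigl(p\cdot 2^{n-i}\cdot a^{\eta_{a,b,i}(q)} + f'^{\,i}_{a,b}(q)\bigr) + b}{2}
\;=\;
p\cdot 2^{n-i-1}\cdot a^{\eta_{a,b,i}(q)+1} + \frac{a\,f'^{\,i}_{a,b}(q)+b}{2}
$$
uses the evenness of $2^{n-i}$ to distribute the division by $2$, and then recognizes the second summand as $f'^{\,i+1}_{a,b}(q)$. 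With both cases handled, the induction closes and the lemma follows immediately at $i=n$.
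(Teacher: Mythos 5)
Your proof is correct, but it runs the induction in the opposite direction from the paper's. The paper inducts on $n$ with the statement kept in its original form: it peels off the \emph{first} application of $f'_{a,b}$, writing $f'^{\,n+1}_{a,b}=f'^{\,n}_{a,b}\circ f'_{a,b}$, and closes the induction by applying the hypothesis at the transformed pair $(p,\tfrac{q}{2})$ when $q$ is even and at $(ap,\,f'_{a,b}(q))$ when $q$ is odd, using the identity $\eta_{a,b,n+1}(q)=1+\eta_{a,b,n}(f'_{a,b}(q))$ in the odd case. You instead fix $n,p,q$ and track the orbit forward, peeling off the \emph{last} application via $f'^{\,i+1}_{a,b}=f'_{a,b}\circ f'^{\,i}_{a,b}$; this forces you to strengthen the invariant with the factor $2^{\,n-i}$, but it buys a completely transparent parity argument, since the correction term $p\cdot 2^{\,n-i}\cdot a^{\eta_{a,b,i}(q)}$ is visibly even for $i<n$, so the two iterates share a parity and the $\eta$-identity at step $i+1$ drops out for free. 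The paper's version avoids any strengthening but must quantify over all $p$ so that the multiplication by $a$ can be absorbed into a new value of $p$; yours keeps $p$ fixed throughout. The core algebra in the odd case, distributing the division by $2$ across $a\bigl(p\,2^{\,n-i}a^{\eta}+f'^{\,i}_{a,b}(q)\bigr)+b$ and recognizing $\tfrac{a\,f'^{\,i}_{a,b}(q)+b}{2}=f'^{\,i+1}_{a,b}(q)$, is the same in both proofs, and your restriction to $i<n$ correctly guarantees that $2^{\,n-i-1}$ remains an integer. Both arguments are complete; yours is a legitimate and arguably more self-contained alternative.
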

The prefix sequential transducer \,${\cal G}'^{\,n}_{a,b}$ \,realizing 
\,$f'^{\,n}_{a,b}$ \,is the synchronous sequential transducer of the division 
by \,$2^n$ \,in base \,$a$ \,with a terminal function defined by the \,$2^n$ 
\,first values of \,$f'^{\,n}_{a,b}$\,. 
\,The length of the final words is needed to take into account the number of 
\,$0$ \,to the left. 
For any vertex, the length of its final word is the number of odd numbers 
among the first \,$n$ \,values of its orbit.
\begin{proposition}\label{Transducf'Iter}
For all \,$n \geq 0$ \,and \,$0 \leq b < a \neq 1$ \,with \,$a,b$ 
\,of the same parity,\\[0.25em]
\hspace*{3em}$\mBAS{2}[{\cal G}'^{\,n}_{a,b}] \,= 
\,(\{0,\ldots,2^n\!-\!1\},\{0\},\omega'_n\,,:_{a,2^n})$ \ with for any 
\,$0 \leq i < 2^n$,\\[0.25em]
\hspace*{0.5em}$\omega'_n(i) \in \{0,\ldots,a-1\}^*$ \ such that 
\ $[\omega'_n(i)]\mBAS{\,a} \,= \,f'^{\,n}_{a,b}(i)$ \ and \ 
$|\omega'_n(i)| \,= \,\eta_{a,b,n}(i)$.
\end{proposition}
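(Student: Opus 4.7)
The plan is to proceed by induction on $n$. The base case $n = 1$ is exactly Proposition \ref{Transducf'}, and the degenerate case $n = 0$ is the identity transducer with $\omega'_0(0) = \varepsilon$, which satisfies $[\varepsilon]\mBAS{a} = 0 = f'^{\,0}_{a,b}(0)$ and $|\varepsilon| = 0 = \eta_{a,b,0}(0)$.

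For the inductive step I would use ${\cal G}'^{\,n+1}_{a,b} = {\cal G}'^{\,n}_{a,b} \compose {\cal G}'_{a,b}$ together with the composition formula for sequential transducers recalled in the paper. For the underlying letter-to-letter transitions, iterating Lemma \ref{CompDiv} shows that $(:_{a,2})^{n+1}$ is in bijection with $:_{a,2^{n+1}}$ via the reverse base-$2$ renaming, and this bijection sends a state $(q, c)$ of the composed transducer (with $q \in \{0, \ldots, 2^n - 1\}$ and $c \in \{0, 1\}$) to the integer $k = q + c \cdot 2^n$.

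It remains to compute the terminal function. By the composition rule for sequential transducers, $\omega'_{n+1}(k) = v \cdot \omega'_{a,b}(c')$ where $c \cheminInd{\omega'_n(q)/v}{:_{a,2}} c'$. By induction, $\omega'_n(q)$ has length $\eta_{a,b,n}(q)$ and value $f'^{\,n}_{a,b}(q)$ in base $a$, so Lemma \ref{CheminDiv} gives $|v| = \eta_{a,b,n}(q)$ and the relation
\[
c \cdot a^{\eta_{a,b,n}(q)} + f'^{\,n}_{a,b}(q) \,=\, 2\,[v]\mBAS{a} + c'.
\]
By Lemma \ref{formule}, the left-hand side equals $f'^{\,n}_{a,b}(k)$, so $c'$ is its parity and $[v]\mBAS{a}$ its integer half. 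A case split on $c'$, using $\omega'_{a,b}(0) = \varepsilon$ and $\omega'_{a,b}(1) = \frac{a+b}{2}$ together with the definition of $f'_{a,b}$, then yields $[\omega'_{n+1}(k)]\mBAS{a} = f'^{\,n+1}_{a,b}(k)$; the length equation $|v \cdot \omega'_{a,b}(c')| = \eta_{a,b,n}(q) + c' = \eta_{a,b,n+1}(k)$ follows from the second assertion of Lemma \ref{formule} together with the fact that $c' = 1$ precisely when $f'^{\,n}_{a,b}(k)$ is odd.

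I expect the main obstacle to be keeping the renaming conventions consistent: the composition-iteration construction flattens the states as $((c_0, \ldots, c_{n-1}), c_n)$, whereas the reverse base-$2$ encoding places the freshly composed coordinate $c_n$ in the most significant position, giving precisely the decomposition $k = q + c_n \cdot 2^n$ needed to invoke Lemma \ref{formule}. Once this bookkeeping is pinned down, everything else reduces to direct applications of Lemmas \ref{CheminDiv}, \ref{CompDiv}, and \ref{formule}.
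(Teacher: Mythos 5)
Your proof is correct, but it factors the induction the other way round from the paper. The paper writes \,${\cal G}'^{\,n+1}_{a,b} \,= \,{\cal G}'_{a,b}\,\compose\,{\cal G}'^{\,n}_{a,b}$, so the fresh coordinate sits in the \emph{least} significant position of the state word: the recursion becomes \,$\omega'_{n+1}(2i) = \omega'_n(i)$ \,and \,$\omega'_{n+1}(2i+1) = c.\omega'_n(j)$ \,for \,$i\,\ \flecheInd{\frac{a+b}{2}/c}{:_{a,2^n}}\ j$, \,i.e. only the single final digit \,$\frac{a+b}{2}$ \,of \,${\cal G}'_{a,b}$ \,has to be pushed through the large division transducer, and Lemma~\ref{formule} is applied to \,$f'_{a,b}(2i+1) = c2^n + j$. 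You instead write \,${\cal G}'^{\,n+1}_{a,b} \,= \,{\cal G}'^{\,n}_{a,b}\,\compose\,{\cal G}'_{a,b}$, place the fresh coordinate in the \emph{most} significant position ($k = q + c\,2^n$, which is indeed the renaming forced by Lemma~\ref{CompDiv} with \,$d = 2^n$, $d' = 2$), push the entire inductively known word \,$\omega'_n(q)$ \,through one copy of \,$:_{a,2}$, and apply Lemma~\ref{formule} to the state \,$k$ \,itself to identify \,$c\,a^{\eta_{a,b,n}(q)} + f'^{\,n}_{a,b}(q)$ \,with \,$f'^{\,n}_{a,b}(k)$; the parity case split on \,$c'$ \,then closes both the value and the length computations. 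Both routes rest on the same three lemmas and both are sound. What the paper's orientation buys is that its recursion on the leading bit of the state word is exactly the length-recurrent definition of the terminal function that is reused verbatim in Proposition~\ref{Resultat} for the infinite transducer \,${\cal G}'^{\,*}_{a,b}$; your orientation gives a clean, self-contained verification of the \,$n$-th power but would need to be reworked to produce that recurrence.
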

\begin{proof}
By induction on \,$n \geq 0$.\\
$n = 0$\,: \,${\cal G}'^{\,0}_{a,b} \,= 
\,(\{\varepsilon\},\{\varepsilon\},\omega,\{\,\varepsilon\ 
\fleche{c/c}\ \varepsilon \mid c \in \{0,\ldots,a-1\}\,\})$ \,with 
\,$\omega(\varepsilon) \,= \,\varepsilon$.\\[0.25em]
$n \ \Longrightarrow \ n+1$\,: \,we have 
\,${\cal G}'^{\,n+1}_{a,b} \,= \,{\cal G}'_{a,b}\,\compose\ {\cal G}'^{\,n}_{a,b}$
\,for the composition of prefix sequential transducers. 
By Lemma~\ref{CompDiv}, the relation 
\,$\mBAS{2}[\,:_{a,2}\,\compose\,:_{a,2^n}]$ \,is equal to \,$:_{a,2^{n+1}}$\,.\\
It remains to verify that \,$\omega'_{n+1}$ \,is the terminal function of 
\,$\mBAS{2}[{\cal G}'^{\,n+1}_{a,b}]$.\\
As \,$\omega'_{a,b}(0) \,= \,\varepsilon$, we get 
\,$\omega'_{n+1}(\mBAS{\,2}[0u]) \,= \,\omega'_n(\mBAS{\,2}[u])$ \,for any 
\,$u \in \{0,1\}^n$ \,{\it i.e.} 
\,$\omega'_{n+1}(2i) \,= \,\omega'_n(i)$ \,for all \,$0 \leq i < 2^n$. 
By induction hypothesis, we get\\[0.25em]
\hspace*{6em}\begin{tabular}{rllllll}
$[\omega'_{n+1}(2i)]\mBAS{\,a}$ & $\,=\,$ & $[\omega'_n(i)]\mBAS{\,a}$ & 
$\,=\,$ & $f'^{\,n}_{a,b}(i)$ & $\,=\,$ & $f'^{\,n+1}_{a,b}(2i)$\\[0.25em]
$|\omega'_{n+1}(2i)|$ & $\,=\,$ & $|\omega'_n(i)|$ & $\,=\,$ & 
$\eta_{a,b,n}(i)$ & $\,=\,$ & $\eta_{a,b,n+1}(2i)$.
\end{tabular}\\[0.25em]
Similarly \,$\omega'_{a,b}(1) \,= \,\frac{a+b}{2}$ \,and for any 
\,$0 \leq i < 2^n$, there exists unique \,$j$ \,and \,$c$ \,such that 
\,$i\,\ \flecheInd{\frac{a+b}{2}/c}{:_{a,2^n}}\ j$ 
\,thus \,$\omega'_{n+1}(2i+1) \,= \,c.\omega'_n(j)$.\\[0.25em]
Moreover \,$f'_{a,b}(2i+1) \,= \,ia + \frac{a+b}{2} \,= \,c2^n + j$. 
By Lemma~\ref{formule} and ind. hyp.,\\[0.25em]
\hspace*{1em}\begin{tabular}{rllllllllll}
$[\omega'_{n+1}(2i+1)]\mBAS{\,a}$ & $\,=\,$ & $[c.\omega'_n(j)]\mBAS{\,a}$ &
$\,=\,$ & $c\,a^{|\omega'_n(j)|} + [\omega'_n(j)]\mBAS{\,a}$ & $\,=\,$ & 
$c\,a^{\eta_{a,b,n}(j)} + f'^{\,n}_{a,b}(j)$\\[0.25em]
 & $\,=\,$ & $f'^{\,n}_{a,b}(c2^n + j)$ & $\,=\,$ & $f'^{\,n+1}_{a,b}(2i+1)$
\end{tabular}\\
and\\[0.25em]
\hspace*{1em}\begin{tabular}{rllllllllll}
$|\omega'_{n+1}(2i+1)|$ & $\,=\,$ & $1 + |\omega'_n(j)|$ & $\,=\,$ & 
$1 + \eta_{a,b,n}(j)$ & $\,=\,$ & $1 + \eta_{a,b,n}(c2^n + j)$\\[0.25em]
 & $\,=\,$ & $1 + \eta_{a,b,n}(f'_{a,b}(2i+1))$ & $\,=\,$ & $\eta_{a,b,n+1}(2i+1)$.
\end{tabular}
\cqfd
\end{proof}
Let us apply this proposition to the acceleration \,$f'_{5,1}$ \,of the 
\,$5n + 1$ \,function. 
We realize a representation in base $5$ of the third power 
\,$f'^{\,3}_{5,1}$ \,of \,$f'_{5,1}$ \,with the transducer 
\,${\cal G}'^{\,3}_{5,1}$ \,of transition set \,$(:_{5,2})^3$ \,of the division 
\,$:_{5,8}$ \,by \,$8$ \,in base \,$5$ \,which was illustrated in the previous 
section. Here is a presentation of its terminal function:
\begin{center}
\includegraphics{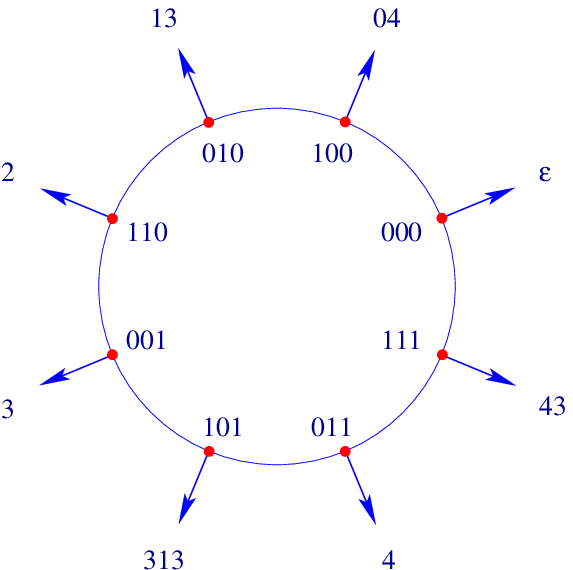}\\
\end{center}
Note that the final word of the state \,$100$ \,is \,$04$ \,and not \,$4$, 
and the accepting path \ 
$\fleche{}\,\ 000\ \,\fleche{4/0}\,\ 001\ \,\fleche{2/2}\,\ 011\,\ 
\fleche{3/4}\,\ 100\ \,\fleche{04}$ \ of input word \,$423$ \,and output word 
\,$02404$ \,represents 
\,$f'^{\,3}_{5,1}(113) \,= \,354$ \,in base \,$5$.\\
We can now give an explicit description of the prefix sequential transducer 
\,${\cal G}^{\,n}_{a,b,d}$ \,realizing \,$f^{\,n}_{a,b,d}$ \,for all \,$n$. 
For all \,$a,b,n,q \geq 0$, \,$\eta_{a,b,n}(q)$ \,is generalized to\\[0.25em]
\hspace*{3em}$\mu_{a,b,d,n}(q) \ = \ |\{\ 0 \leq i < n\ |\ f^{\,i}_{a,b,d}(q)$ 
\,not multiple of \,$d\ \}|$\\[0.25em]
the number of integers that are not multiples of \,$d$ \,among the first \,$n$ 
\,numbers of the orbit from \,$q$ \,of \,$f_{a,b,d}$\,. 
Let us adapt Lemma~\ref{formule} to the powers of \,$f_{a,b,d}$\,.
\begin{lemma}\label{formuleBis}
For all natural numbers \,$a,b,d,n,p,q$ \,with \,$d > 0$, we have\\[0.25em]
\hspace*{9em}\begin{tabular}{rcl}
$f_{a,b,d}^{\,n}(pd^n + q)$ & $\,=\,$ & 
$p\,(ad)^{\mu_{a,b,d,n}(q)} + f_{a,b,d}^{\,n}(q)$\\[0.25em]
$\mu_{a,b,d,n}(pd^n + q)$ & $=$ & $\mu_{a,b,d,n}(q)$.
\end{tabular}\\
\end{lemma}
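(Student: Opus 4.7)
The plan is to prove both identities simultaneously by induction on $n \geq 0$, since each inductive step will require applying the hypothesis for $n$ to both formulas. The base case $n=0$ is immediate: $f^{\,0}_{a,b,d}$ is the identity, $\mu_{a,b,d,0}(q) = 0$, and $(ad)^0 = 1$, so both equations reduce to $pd^0 + q = p + q$ and $0 = 0$.

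For the inductive step, the key observation is that $pd^{n+1}+q \equiv q \pmod d$, so $pd^{n+1}+q$ is a multiple of $d$ if and only if $q$ is. I would split on this dichotomy.

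\emph{Case 1: $q$ is a multiple of $d$.} Then one application of $f_{a,b,d}$ gives $f_{a,b,d}(pd^{n+1}+q) = pd^n + q/d$ and $f_{a,b,d}(q) = q/d$. Neither element contributes to the count of non-multiples, hence $\mu_{a,b,d,n+1}(pd^{n+1}+q) = \mu_{a,b,d,n}(pd^n + q/d)$ and $\mu_{a,b,d,n+1}(q) = \mu_{a,b,d,n}(q/d)$. Applying the induction hypothesis to $pd^n + q/d$ yields both desired equalities at level $n+1$.

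\emph{Case 2: $q$ is not a multiple of $d$.} Here $f_{a,b,d}(pd^{n+1}+q) = a(pd^{n+1}+q) + b = (pad)d^n + (aq+b)$ and $f_{a,b,d}(q) = aq+b$. Now each element contributes one non-multiple to the count, so $\mu_{a,b,d,n+1}(pd^{n+1}+q) = 1 + \mu_{a,b,d,n}((pad)d^n + (aq+b))$ and similarly for $q$. Applying the induction hypothesis with new parameters $p' = pad$ and $q' = aq+b$ gives the count identity directly, and produces $f^{\,n+1}_{a,b,d}(pd^{n+1}+q) = (pad)(ad)^{\mu_{a,b,d,n}(aq+b)} + f^{\,n}_{a,b,d}(aq+b) = p(ad)^{1+\mu_{a,b,d,n}(aq+b)} + f^{\,n+1}_{a,b,d}(q) = p(ad)^{\mu_{a,b,d,n+1}(q)} + f^{\,n+1}_{a,b,d}(q)$, as required.

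No step is really hard; the main care needed is bookkeeping the factor of $ad$ that appears exactly once per non-multiple encountered, matching the exponent $\mu_{a,b,d,n}$. This mirrors the proof of Lemma~\ref{formule}, with $2$ replaced by $d$ and ``odd'' replaced by ``not a multiple of $d$'', and the multiplier $a$ in the odd branch replaced by $ad$ since the $d$-branch of $f_{a,b,d}$ no longer divides by $d$ afterwards when we factor out $p$.
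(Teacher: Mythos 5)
Your proof is correct and follows essentially the same route as the paper's: simultaneous induction on $n$ with a case split on whether $q$ is a multiple of $d$, using the reparametrization $p' = pad$, $q' = aq+b$ in the non-multiple case. The only cosmetic difference is that you state explicitly the observation $pd^{n+1}+q \equiv q \pmod d$, which the paper leaves implicit.
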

Similarly to Proposition~\ref{Transducf'Iter}, we get an explicit description 
of \,${\cal G}^n_{a,b,d}$\,.
\begin{theorem}\label{TransducfIter}
For all integers \,$n \geq 0$ \,and \,$0 \leq b < a \neq 1$ \,and \,$d > 0$,
\\[0.25em]
\hspace*{3em}$\mBAS{d}[{\cal G}^{\,n}_{a,b,d}] \,= 
\,(\{0,\ldots,d^n\!-\!1\},\{0\},\omega_n\,,:_{ad,d^n})$ \ with for any 
\,$0 \leq i < d^n$,\\[0.25em]
\hspace*{1em}$\omega_n(i) \in \{0,\ldots,ad - 1\}^*$ \ with 
\ $[\omega_n(i)]\mBAS{\,ad} \,= \,f_{a,b,d}^n(i)$ \ and \ 
$|\omega_n(i)| \,= \,\mu_{a,b,d,n}(i)$.
\end{theorem}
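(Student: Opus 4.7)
The plan is induction on $n \geq 0$, following the pattern of Proposition~\ref{Transducf'Iter} with $d$ replacing $2$ and $ad$ replacing $a$. For the base case $n=0$, ${\cal G}^{\,0}_{a,b,d}$ is the identity transducer on the single state $\varepsilon$, which under the $\mBAS{d}$-renaming becomes state $0$ with $\omega_0(0)=\varepsilon$; both $[\varepsilon]\mBAS{\,ad}=0=f^{\,0}_{a,b,d}(0)$ and $|\varepsilon|=0=\mu_{a,b,d,0}(0)$ hold trivially.

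For the inductive step I would write \,${\cal G}^{\,n+1}_{a,b,d} = {\cal G}_{a,b,d}\,\compose\,{\cal G}^{\,n}_{a,b,d}$ \,as a composition of prefix sequential transducers. By Lemma~\ref{CompDiv} applied with input base $ad$ and divisors $d$ and $d^n$, the composition of the underlying synchronous transition sets becomes, after the $\mBAS{d}$-renaming of pairs $(r,i')$ to the integer $r + d\,i'$, exactly $:_{ad,d^{n+1}}$. Every state $i \in \{0,\ldots,d^{n+1}-1\}$ thus decomposes uniquely as $i = r + di'$ with $0 \leq r < d$ and $0 \leq i' < d^n$, with $r$ coming from the outer ${\cal G}_{a,b,d}$-copy and $i'$ from the inner ${\cal G}^{\,n}_{a,b,d}$-copy.

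The core of the proof is to match the composed terminal function with the claimed $\omega_{n+1}$. When $r=0$: $\omega_{a,b}(0)=\varepsilon$ so $\omega_{n+1}(di') = \omega_n(i')$; and since $f_{a,b,d}(di')=i'$ one has $f^{\,n+1}_{a,b,d}(di') = f^{\,n}_{a,b,d}(i')$ and $\mu_{a,b,d,n+1}(di') = \mu_{a,b,d,n}(i')$, so the induction hypothesis transfers verbatim. When $r \neq 0$: $\omega_{a,b}(r) = ar+b$ is a single digit of base $ad$ (since $b<a$ gives $ar+b \leq a(d-1)+b < ad$), and letting $i'\ \fleche{(ar+b)/c}_{:_{ad,d^n}}\ j$ be the unique corresponding transition, Lemma~\ref{CheminDiv} gives $i'(ad)+(ar+b)=c\,d^n+j$. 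This equation is precisely $f_{a,b,d}(r+di') = c\,d^n + j$, so by Lemma~\ref{formuleBis} and the induction hypothesis, $[c.\omega_n(j)]\mBAS{\,ad} = c(ad)^{\mu_{a,b,d,n}(j)} + f^{\,n}_{a,b,d}(j) = f^{\,n}_{a,b,d}(cd^n + j) = f^{\,n+1}_{a,b,d}(r+di')$ and $|c.\omega_n(j)| = 1 + \mu_{a,b,d,n}(j) = \mu_{a,b,d,n+1}(r+di')$; the composition rule for terminal functions gives $\omega_{n+1}(r+di') = c.\omega_n(j)$, completing the step.

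The main obstacle is pure bookkeeping: keeping straight the correspondence between tuple states of the composed transducer and their $\mBAS{d}$-renamed integers, and recognizing that the transition equation of $:_{ad,d^n}$ implements exactly the \emph{multiply by $a$, add $b$} branch of $f_{a,b,d}$ on the non-multiple-of-$d$ input $r+di'$. Once this correspondence is clean, the algebra is a direct generalization of the $d=2$ case of Proposition~\ref{Transducf'Iter}.
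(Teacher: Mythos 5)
Your proof is correct and follows essentially the same route as the paper's: induction on $n$, writing ${\cal G}^{\,n+1}_{a,b,d} = {\cal G}_{a,b,d}\,\compose\,{\cal G}^{\,n}_{a,b,d}$, invoking Lemma~\ref{CompDiv} to identify the composed transition set with $:_{ad,d^{n+1}}$ under the coding $j+di$, and then splitting the terminal-function verification into the cases $r=0$ and $r\neq 0$ with Lemma~\ref{formuleBis} closing the latter. The only cosmetic difference is that you cite Lemma~\ref{CheminDiv} for the single-transition equation $i'(ad)+(ar+b)=cd^n+j$ where the paper reads it off directly from the definition of $:_{ad,d^n}$.
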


\section{Prefix input-deterministic transducers for \,${\bf f^{\,*}_{a,b,d}}$}

{\indent}We present a simple infinite prefix input-deterministic transducer 
realizing the composition closure of \,$f_{a,b,d}$ \,for \,$b < a \neq 1$ 
\,and \,$d > 0$.\\

To give a description of a prefix input-deterministic transducer realizing the 
composition closure \,$f'^{\,*}_{a,b}$\,, \,we just take the union of the 
transition sets \,$(:_{a,2})^n$ \,of the division by \,$2^n$ \,in base \,$a$, 
plus the set of initial states \,$0^n$, plus a terminal function defined 
according to \,$b$ \,by length induction \,{\it i.e.} \,from the vertices of 
the division by \,$2^n$ \,to the vertices of the division by \,$2^{n-1}$.
\begin{proposition}\label{Resultat}
For all \,$0 \leq b < a$ \,with \,$a > 1$ \,and \,$a,b$ \,of the same parity,
\\[0.25em]
\hspace*{3em}
${\cal G}'^{\,*}_{a,b} \ = \ (\{0,1\}^*,0^*,\omega'_{a,b}\,,:_{a,2}^{\,*})$ 
\ where for all \,$u \in \{0,1\}^*$,\\[0.25em]
\hspace*{3em}$\omega'_{a,b}(0u) \,= \,\omega'_{a,b}(u)$ \ and \ 
$\omega'_{a,b}(1u) \,= \,c.\omega'_{a,b}(v)$ \ \ for \ 
$1u\ \fleche{b/c}_{:_{a,2}^{\,*}}\ 0v$.
\end{proposition}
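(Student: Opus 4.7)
The plan is to exploit the layered structure of \,${\cal G}'^{\,*}_{a,b}$ \,and reduce to the finite transducers \,${\cal G}'^{\,n}_{a,b}$ \,already analysed in Proposition~\ref{Transducf'Iter}. Since \,$:_{a,2}^{\,*} \,= \,\bigcup_{n \geq 0}(:_{a,2})^n$ \,and every transition in \,$(:_{a,2})^n$ \,preserves the length of the state, the state space \,$\{0,1\}^*$ \,partitions into length-stable layers. For each \,$n \geq 0$, the restriction of \,${\cal G}'^{\,*}_{a,b}$ \,to states of length \,$n$, with initial state \,$0^n$, is a finite sub-transducer that, via the renaming \,$u \mapsto \mBAS{2}[u]$ \,and an \,$n$-fold application of Lemma~\ref{CompDiv}, is isomorphic at the graph level to \,$\mBAS{2}[{\cal G}'^{\,n}_{a,b}]$. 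It thus suffices to check that the terminal functions agree layerwise.

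First I would verify that the recursive definition of \,$\omega'_{a,b}$ \,is well-founded. The base value \,$\omega'_{a,b}(\varepsilon) = \varepsilon$ \,is forced, since the sole length-\,$0$ \,accepting path realizes \,$f'^{\,0}_{a,b}$. For \,$u = 0v$, the first clause strictly reduces the state length. For \,$u = 1v$, the second clause appeals to a transition \,$1v\ \fleche{b/c}_{:_{a,2}^{\,*}}\ 0v'$ \,with \,$|v'| = |v|$\,: input-determinism of \,$:_{a,2}^{\,*}$ \,makes \,$c$ \,and \,$v'$ \,unique, and the target begins with \,$0$ \,precisely because the parity hypothesis on \,$a,b$ \,makes \,$(2\mBAS{2}[v]+1)a + b$ \,even. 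So the recursion terminates by induction on \,$|u|$. I would then prove, again by induction on \,$n$, that for every \,$u \in \{0,1\}^n$ \,we have \,$\omega'_{a,b}(u) \,= \,\omega'_n(\mBAS{2}[u])$, where \,$\omega'_n$ \,is the terminal function of \,$\mBAS{2}[{\cal G}'^{\,n}_{a,b}]$. The case \,$n = 0$ \,is immediate; for \,$u \in \{0,1\}^{n+1}$, one splits on \,$u = 0v$, matching the identity \,$\omega'_{n+1}(2i) = \omega'_n(i)$ \,of Proposition~\ref{Transducf'Iter}, or on \,$u = 1v$.

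The case \,$u = 1v$ \,is where the main obstacle lies: one has to identify the single-step transition \,$1v\ \fleche{b/c}_{:_{a,2}^{\,*}}\ 0v'$, which by the Euclidean identity satisfies \,$(2\mBAS{2}[v]+1)a + b \,= \,c\cdot 2^{n+1} + 2\mBAS{2}[v']$, with the transition in \,$:_{a,2^n}$ \,from \,$\mBAS{2}[v]$ \,to \,$\mBAS{2}[v']$ \,on input \,$(a+b)/2$ \,and output \,$c$, which is precisely the transition used in Proposition~\ref{Transducf'Iter} to set \,$\omega'_{n+1}(2i+1) = c\cdot\omega'_n(j)$. Dividing the displayed equation by \,$2$, valid since \,$a+b$ \,is even, gives exactly this correspondence; the induction hypothesis then closes the case. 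Combining the layers, \,${\cal G}'^{\,*}_{a,b}$ \,realizes \,$\bigcup_{n \geq 0} \InfSup{{\cal G}'^{\,n}_{a,b}} \,= \,\bigcup_{n \geq 0} f'^{\,n}_{a,b} \,= \,f'^{\,*}_{a,b}$, as required.
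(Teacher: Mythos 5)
Your proof is correct and follows essentially the same route as the paper's: both reduce to the recursive description of the terminal functions \,$\omega'_{n+1}$ \,established in Proposition~\ref{Transducf'Iter} and then identify, via the Euclidean identity \,$(2i+1)a + b \,= \,c2^{n+1} + 2j \ \Longleftrightarrow \ ia + \frac{a+b}{2} \,= \,c2^n + j$, the one-step \,$b$-transition of \,$:_{a,2^{n+1}}$ \,with the \,$\frac{a+b}{2}$-transition of \,$:_{a,2^n}$. The extra checks you supply (well-foundedness of the recursion and the fact that the parity hypothesis forces the target state to begin with \,$0$) are left implicit in the paper but are welcome detail, not a different argument.
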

\begin{proof}
We have seen in the proof of Proposition~\ref{Transducf'Iter} that for all 
\,$n \geq 0$, the terminal function \,$\omega'_{n+1}$ \,of 
\,${\cal G}'^{\,n+1}_{a,b}$ \,is defined recursively for all \,$0 \leq i < 2^n$ 
\,by\\[0.25em]
\hspace*{6em}\begin{tabular}{rcll}
$\omega'_{n+1}(2i)$ & $=$ & $\omega'_n(i)$\\[-0.25em]
$\omega'_{n+1}(2i+1)$ & $=$ & $c.\omega'_n(j)$ & for \ 
$i\,\ \flecheInd{\frac{a+b}{2}/c}{:_{a,2^n}}\ j$
\end{tabular}\\[0.25em]
and we have\\[0.25em]
\hspace*{1em}\begin{tabular}{rclcl}
$i\,\ \flecheInd{\frac{a+b}{2}/c}{:_{a,2^n}}\ j$ & $\,\Longleftrightarrow\,$ & 
$ia + \frac{a+b}{2} \,= \,c2^n + j$ & $\,\Longleftrightarrow\,$ & 
$(2i+1)a + b \,= \,c2^{n+1} + 2j$\\[0.5em]
 & $\Longleftrightarrow$ & $2i+1\ \fleche{b/c}_{:_{a,2^{n+1}}}\ 2j$.
\end{tabular}
\cqfd
\end{proof}
We visualize \,${\cal G}'^{\,*}_{a,b}$ \,by a cone with \,$\varepsilon$ \,at the 
tip and circular sections. The \,$n$-th section is the previously given 
representation of the Euclidean division \,$(:_{a,2})^n$ \,in base \,$a$ \,by 
\,$2^n$ \,of initial state \,$0^n$. The terminal function \,$\omega'_{a,b}$ \,is 
represented as follows:
\begin{center}
\includegraphics{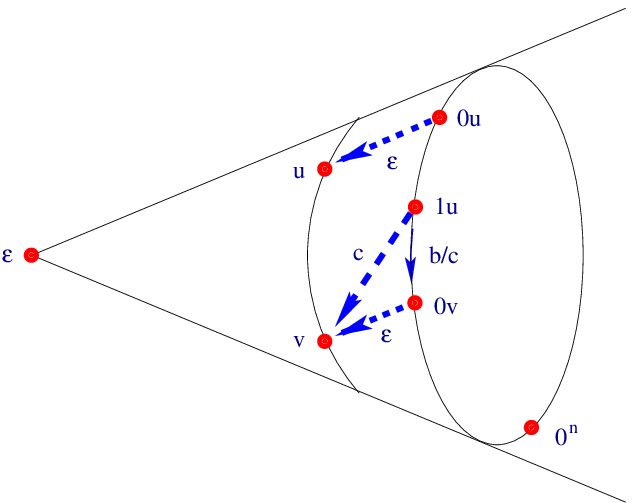}\\
\end{center}
with a transition \,$0u\ \fleche{\varepsilon}\ u$ \,from any node starting by 
\,$0$, and a transition \,$1u\ \fleche{c}\ v$ \,from any node starting by 
\,$1$ \,for the transition \,$1u\ \fleche{b/c}\ 0v$ \,of the division by 
\,$2^{|u|+1}$ in base \,$a$. 
Note that these transitions of the terminal function can only be used at the 
end of an accepting path.\\
Similarly to Proposition~\ref{Resultat}, we get an explicit description of the 
prefix input-deterministic transducer \,${\cal G}^{\,*}_{a,b,d}$\,.
\begin{theorem}\label{ResultatGene}
For all integers \,$0 \leq b < a \neq 1$ \,and \,$d > 0$,\\[0.25em]
\hspace*{1.5em}
${\cal G}^{\,*}_{a,b,d} \ = \ (\downarrow_d^*,0^*,\omega_{a,b,d}\,,\,:_{ad,d}^{\,*})$ 
\ with for all \,$u \in \,\downarrow_d^*$ \,and \,$0 < i < d$,\\[0.25em]
\hspace*{1.5em}$\omega_{a,b,d}(0u) \,= \,\omega_{a,b,d}(u)$ \ and \ 
$\omega_{a,b,d}(iu) \,= \,c.\omega_{a,b,d}(v)$ \ for \ 
$iu\ \fleche{bd/c}_{:_{ad,d}^{\,*}}\ 0v$.
\end{theorem}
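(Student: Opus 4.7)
The plan is to mirror the argument of Proposition~\ref{Resultat} and lift it from $d=2$ to arbitrary $d > 0$, using Theorem~\ref{TransducfIter} and Lemma~\ref{formuleBis} in place of Proposition~\ref{Transducf'Iter} and Lemma~\ref{formule}. By definition, $\mathcal{G}^{\,*}_{a,b,d} \,= \,\bigcup_{n \geq 0} \mathcal{G}^{\,n}_{a,b,d}$, so Theorem~\ref{TransducfIter} together with the iterated form of Lemma~\ref{CompDiv} (identifying integer-labelled states with tuples of $\downarrow_d^n$ via $\mBAS{d}[\cdot]$) already yields the state set $\downarrow_d^*$, the initial-state set $0^*$ and the transition set $:_{ad,d}^{\,*}$. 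What remains is to check that the two recursive clauses defining $\omega_{a,b,d}$ coincide, on each level $n+1$, with the terminal function $\omega_{n+1}$ produced by Theorem~\ref{TransducfIter}.

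I would then proceed by induction on $n \geq 0$. For the first clause: since $\mBAS{d}[0u] \,= \,d \cdot \mBAS{d}[u]$ is a multiple of $d$, one application of $f_{a,b,d}$ returns $\mBAS{d}[u]$ and contributes nothing to the counter $\mu_{a,b,d,\cdot}$. Hence $f^{\,n+1}_{a,b,d}(\mBAS{d}[0u]) \,= \,f^{\,n}_{a,b,d}(\mBAS{d}[u])$ and $\mu_{a,b,d,n+1}(\mBAS{d}[0u]) \,= \,\mu_{a,b,d,n}(\mBAS{d}[u])$, forcing $\omega_{n+1}(\mBAS{d}[0u]) \,= \,\omega_n(\mBAS{d}[u])$ in both value and length, whence $\omega_{a,b,d}(0u) = \omega_{a,b,d}(u)$ follows from the induction hypothesis.

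For the second clause, I would unfold the transition $iu \fleche{bd/c}_{:_{ad,d}^{\,*}} 0v$ with $0 < i < d$ and $|u|=|v|=n$: by the iterated form of Lemma~\ref{CompDiv}, it is equivalent to the arithmetic identity $\mBAS{d}[iu]\cdot ad + bd \,= \,c\cdot d^{n+1} + d\cdot \mBAS{d}[v]$, which, after dividing through by $d$, reads $a\,\mBAS{d}[iu] + b \,= \,c\,d^{\,n} + \mBAS{d}[v]$; since $i \neq 0$, the left-hand side is exactly $f_{a,b,d}(\mBAS{d}[iu])$, so $f^{\,n+1}_{a,b,d}(\mBAS{d}[iu]) \,= \,f^{\,n}_{a,b,d}(c\,d^{\,n} + \mBAS{d}[v])$. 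Lemma~\ref{formuleBis} then gives $f^{\,n+1}_{a,b,d}(\mBAS{d}[iu]) \,= \,c\,(ad)^{\mu_{a,b,d,n}(\mBAS{d}[v])} + f^{\,n}_{a,b,d}(\mBAS{d}[v])$ and $\mu_{a,b,d,n+1}(\mBAS{d}[iu]) \,= \,1 + \mu_{a,b,d,n}(\mBAS{d}[v])$; combining with the induction hypothesis applied to $\omega_{a,b,d}(v)$ identifies $\omega_{n+1}(\mBAS{d}[iu])$ with the word $c \cdot \omega_n(\mBAS{d}[v])$, as required.

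The main obstacle is the careful bookkeeping: the transition label $bd$ in $:_{ad,d}^{\,*}$ encodes both the offset $b$ and the new base factor $d$, and one must verify that this is exactly the right letter to read so that, after one step of the division by $d^{\,n+1}$ in base $ad$, the residual computation lands at the level-$n$ state $v$ with $c$ emerging as the leading digit of $\omega_{a,b,d}(iu)$ at precisely the exponent $\mu_{a,b,d,n}(\mBAS{d}[v]) \,= \,|\omega_n(\mBAS{d}[v])|$ required by Theorem~\ref{TransducfIter}.
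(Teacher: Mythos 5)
Your proposal is correct and follows essentially the same route as the paper: both reduce the claim to the recursive description of the terminal functions $\omega_{n+1}$ coming from Theorem~\ref{TransducfIter}, and both hinge on the same arithmetic equivalence $iad + aj + b = cd^{\,n} + k \Longleftrightarrow (di+j)ad + bd = cd^{\,n+1} + dk$ identifying the transition $i\ \fleche{aj+b/c}_{:_{ad,d^n}}\ k$ with $iu\ \fleche{bd/c}_{:_{ad,d^{n+1}}}\ 0v$. The only difference is presentational: you re-derive the recursion from Lemma~\ref{formuleBis} and Lemma~\ref{CompDiv}, whereas the paper simply cites it from the proof of Theorem~\ref{TransducfIter} and records the one-line equivalence of the two transition conditions.
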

Theorem~\ref{ResultatGene} states that under the condition \,$b < a \neq 1$, we 
realize the composition closure of \,$f_{a,b,d}$ \,by taking the union of the 
divisions \,$(:_{ad,d})^n$ \,by \,$d^n$ \,in base \,$ad$ \,of initial state 
\,$0^n$, plus a recurrent terminal function on \,$n$ \,dependent on \,$ab$.
\\[2em]
{\Large \bf Conclusion}\\[1em]
For any natural numbers \,$a,b,d$ \,with \,$b < a \neq 1$ \,and \,$d \neq 0$, 
we have given an explicit construction of an input-deterministic 
letter-to-letter transducer realizing the closure under composition of 
\,$f_{a,b,d}$. In its geometric representation, the disposition of the vertices 
is well appropriate for both the transitions of the Euclidean divisions and 
those of the terminal function. 
It might be a new approach to consider the circularity of the functions
\,$f_{a,b,d}$ \,namely the existence of paths \,$0^n\ \chemin{uv/0^{|v|}u}\ x$
\,where \,$v$ \,is the terminal word of the vertex \,$x$ \,in the transducer 
of the division by \,$d^n$ \,in base \,$ad$. 
However, the circularity of the Collatz function is already considered as a
difficult subproblem of the Collatz conjecture. 
With this approach and for the acceleration \,$f'_{3,1,2}$ \,of the Collatz
function, it comes down to a deeper understanding of the Euclidean divisions 
by \,$2^n$ \,in base \,$3$ \,and its transitions of input \,$1$.\\
In this paper, we studied the realizability of functions on integers by 
both sequential and synchronized transducers. 
It would be interesting to extend this paper to other functions.

\newpage\noindent
\pagenumbering{roman}
\setcounter{section}{0}
\renewcommand{\thesection}{\Alph{section}}
\section{Appendix}
We give here in details all the missing proofs.\\

\noindent{\large\bf 4 \ \ Prefix and suffix sequential transducers}\\

The two families of prefix sequential and of suffix sequential transducers 
are closed under composition.
\enonce{Proposition~\ref{PrefixSuffixSequential}}
{The prefix and the suffix sequential functions are preserved in quadratic time 
and space under composition, intersection and difference.}
\mbox{}\\[-1em]\mbox{}\noindent
{\bf i)} Let \,${\cal T} \,= \,(Q,i,\omega,T)$ \,and 
\,${\cal T}' \,= \,(Q',i',\omega',T')$ \,be prefix sequential transducers. 
We realize \,$\InfSup{\cal T}\,\compose\ \InfSup{{\cal T}'}$ \,by the 
composition of \,${\cal T}$ \,by \,${\cal T}'$ \,having this simpler form:
\\[0.25em]
\hspace*{1em}\begin{tabular}{lrcl}
 & ${\cal T}\,\compose\,{\cal T}'$ & $=$ & 
$(Q{\croix}Q',(i,i'),\omega\,\compose\,\omega',T\,\compose\,T')$\\[0.25em]
where & $T\,\compose\,T'$ & $=$ & 
$\{\ (p,p')\ \fleche{a/c}\ (q,q')\ |\ \exists\ b 
\ (p\ \fleche{a/b}_T\ q \,\wedge \,p'\ \fleche{b/c}_{T'}\ q')\ \}$\\[0.25em]
 & $(\omega\,\compose\,\omega')\,(q,p')$ & $=$ & $v.\omega'(q')$ \ \ 
for \ $p'\ \cheminInd{\,\omega(q)/v}{T'}\ q'$
\end{tabular}\\[0.25em]
that is illustrated as follows:\\[0.25em]
\hspace*{3em}{$\left.\begin{tabular}{l}
$\longrightarrow\ i\ \cheminInd{u/v}{T}\ q\ \fleche{w}_{\omega}$\\[0.25em]
$\longrightarrow\ i'\ \cheminInd{v/x}{T'}\ p'\ \cheminInd{w/y}{T'}\,
\fleche{z}_{\omega'}$
\end{tabular}\right\}
\ \longrightarrow\ (i,i')\ \chemin{u/x}_{_{T\,\compose\,T'}}\ (q,p')\ 
\fleche{yz}_{\omega\,\compose\,\omega'}$}\\[0.5em]
We realize \,$\InfSup{\cal T} \,\cap \,\InfSup{{\cal T}'}$ \,by the usual 
{\it synchronization product} \,of \,${\cal T}$ \,and \,${\cal T}'$\,:\\[0.25em]
\hspace*{1em}\begin{tabular}{lrcl}
 & ${\cal T}\croix{\cal T}'$ & $=$ & 
$(Q{\croix}Q',(i,i'),\omega{\croix}\omega',T{\croix}T')$\\[0.25em]
where & $T{\croix}T'$ & $=$ & $\{\ (p,p')\ \fleche{a/b}\ (q,q')\ |\ 
p\ \fleche{a/b}_T\ q \,\wedge \,p'\ \fleche{a/b}_{T'}\ q'\ \}$\\[0.25em]
and & $(\omega{\croix}\omega')(q,q')$ & $=$ & $\omega(q)$ \ when \ 
$\omega(q) \,= \,\omega'(q')$.
\end{tabular}\\[0.5em]
We also realize \,$\InfSup{\cal T} \,- \,\InfSup{{\cal T}'}$ \,by the following 
prefix sequential transducer:\\[0.25em]
\hspace*{1em}\begin{tabular}{lrcl}
 & ${\cal T}-{\cal T}'$ & $=$ & 
$(Q \cup (Q{\croix}Q'),(i,i'),\omega{\setminus}\omega',T{\setminus}T')$
\\[0.25em]
where & $T{\setminus}T'$ & $=$ & $(T{\croix}T') \,\cup \,T \,\cup 
\,\{\ (p,p')\ \fleche{a/b}\ q\ |\ p\ \fleche{a/b}_T\ q \,\wedge 
\,p'\ \nofleche{a/b}{}\,\!_{T'}\ \}$\\[0.25em]
and & $(\omega{\setminus}\omega')(q)$ & $=$ & $\omega(q)$ \ for any 
\,$q \in {\rm dom}(\omega)$\\[0.25em]
 & $(\omega{\setminus}\omega')(q,q')$ & $=$ & $\omega(q)$ \ for any 
\,$q \in {\rm dom}(\omega)$ \,and\\[0.25em]
 & & & \hspace*{6em}$(q' \notin {\rm dom}(\omega') \,\vee 
\,\omega'(q') \,\neq \,\omega(q))$.
\end{tabular}\\
The transducers \,${\cal T}\,\compose\,{\cal T}'$, ${\cal T}\croix{\cal T}'$, 
${\cal T}-{\cal T}'$ \,can be constructed from 
\,${\cal T}$ \,and \,${\cal T}'$ \,in quadratic time and space.\\[0.5em]
{\bf ii)} Let \,${\cal T} \,= \,(Q,i,F,T)$ \,and 
\,${\cal T}' \,= \,(Q',i',F',T')$ \,be suffix sequential transducers. 
The composition of \,${\cal T}$ \,by \,${\cal T}'$ \,has this simpler form:
\\[0.25em]
\hspace*{1em}\begin{tabular}{lrcl}
 & ${\cal T}\,\compose\,{\cal T}'$ & $=$ & 
$(Q{\croix}Q',(i,i'),F{\croix}F',T\,\compose\,T')$\\[0.25em]
where & $T\,\compose\,T'$ & $=$ & 
$\{\ (p,p')\ \fleche{a/\varepsilon}\ (q,p')\ |\ 
p\ \fleche{a/\varepsilon}_T\ q \,\wedge \,p' \in Q'\ \}$\\[0.25em]
 & & $\cup$ & $\{\ (p,p')\ \fleche{a/b}\ (q,q')\ |\ \exists\ c 
\ (p\ \fleche{a/c}_T\ q \,\wedge \,p'\ \fleche{c/b}_{T'}\ q')\ \}$.
\end{tabular}\\[0.25em]
The composition is illustrated as follows:\\[0.25em]
\hspace*{0em}{$\left.\begin{tabular}{l}
$\longrightarrow\ i\ \cheminInd{u/\varepsilon}{T}\,\cheminInd{v/x}{T}\ p\ 
\cheminInd{w/y}{T}\ q\ \longrightarrow$\\[0.25em]
$\longrightarrow\ i'\ \cheminInd{x/\varepsilon}{T'}\ p'\ 
\cheminInd{y/z}{T'}\ q'\ \longrightarrow$\\[0.25em]
$|v| = |x| \,\wedge \,|w| = |y| = |z|$
\end{tabular}\right\}
\ \longrightarrow\ (i,i')\ \chemin{uv/\varepsilon}_{_{T\,\compose\,T'}}\ (p,p')\ 
\chemin{w/z}_{_{T\,\compose\,T'}}\ (q,q')\ \longrightarrow$}\\[0.5em]
We have to show that \,$T\,\compose\,T'$ \,remains of initial 
\,$\varepsilon$-output.\\
Let \,$(p,p')\ \fleche{a/b}_{T \compose T'}\ (q,q')\ \fleche{c/d}_{T \compose T'}\ 
(r,r')$ \,with \,$b \neq \varepsilon$.\\
We have to check that \,$d \neq \varepsilon$.\\
There exists \,$e$ \,such that \,$p\ \fleche{a/e}_T\ q$ \,and 
\,$p'\ \fleche{e/b}_{T'}\ q'$.\\
As \,$T'$ \,is without \,$\varepsilon$-input transition, $e \neq \varepsilon$.\\
As \,$T$ \,is of initial \,$\varepsilon$-output, 
$q\ \nofleche{c/\varepsilon}{}\,\!_T\ r$. 
So \,$q'\ \fleche{\cdot/d}_{T'}\ r'$.\\
As \,$T'$ \,is of initial \,$\varepsilon$-output, $d \neq \varepsilon$.
\\[0.5em]
We realize \,$\InfSup{\cal T} \,\cap \,\InfSup{{\cal T}'}$ \,by the 
synchronization product of \,${\cal T}$ \,and \,${\cal T}'$\,:\\[0.25em]
\hspace*{1em}\begin{tabular}{lrcl}
 & ${\cal T}{\croix}{\cal T}'$ & $=$ & 
$(Q{\croix}Q',(i,i'),F{\croix}F',T{\croix}T')$\\[0.25em]
where & $T{\croix}T'$ & $=$ & 
$\{\ (p,p')\ \fleche{a/b}\ (q,q')\ |\ p\ \fleche{a/b}_T\ q \,\wedge 
\,p'\ \fleche{a/b}_{T'}\ q'\ \}$.
\end{tabular}\\[0.75em]
We realize \,$\InfSup{\cal T} \,- \,\InfSup{{\cal T}'}$ \,by the following 
suffix sequential transducer:\\[0.25em]
\hspace*{1em}\begin{tabular}{lrcl}
 & ${\cal T}\,{-}\,{\cal T}'$ & $=$ & 
$(Q \cup (Q{\croix}Q'),(i,i'),F \cup (F{\croix}(Q'-F')),T\,{\setminus}\,T')$.
\end{tabular}\\[0.5em]
These three transducers can be constructed in quadratic time and space 
accor\-ding to \,${\cal T}$ \,and \,${\cal T}'$. 
\cqfd\\

\noindent{\large\bf 6 \ \ Suffix automaticity of functions \,${\bf f_{a,b,d}}$}\\

Let us extend \,$_{d,a}\croix$ \,to paths.
\enonce{Lemma~\ref{PathMult}}
{For any \,$i,j \geq 0$ \,and \,$u,v \in \,\downarrow_d^*$\,, \,we have
\\[0.25em]
\hspace*{6em}$i\,\ \chemin{u/v}_{_{d,a}\croix}\ j \ \ \ \Longleftrightarrow \ \ \ 
\mBAS{d}[u]a + i \,= \,\mBAS{\,d}[v] + j\,d^{|u|}$ \,and \ $|u| = |v|$.}
\mbox{}\\[-1em]\mbox{}\noindent
$\Longrightarrow$\ : \,As \,$_{d,a}\croix$ \,is synchronous, \,$|u| = |v|$.\\ 
Let us check the equality by induction on \,$|u| \geq 0$.\\
$|u| = 0$\,: \,We have \,$u = \varepsilon = v$ \,and \,$i = j$ \,hence the 
equality.\\
Let \,$i\ \chemin{bu/cv}\ j$ \,with \,$b,c \in \,\downarrow_d$ \,and the 
implication true for \,$u$.\\
There exists \,$k$ \,such that \,$i\ \fleche{b/c}\ k\ \chemin{u/v}\ j$.\\
Thus \,$ab +i \,= \,c + dk$ \,and 
\,$\mBAS{d}[u]a + k \,= \,\mBAS{\,d}[v] + j\,d^{|u|}$. Hence\\[0.25em]
\hspace*{3em}\begin{tabular}{rclcl}
$\mBAS{\,d}[cv] + j\,d^{|bu|}$ & $\,=\,$ & 
$c + (\mBAS{\,d}[v] + j\,d^{|u|})d$ & $\,=\,$ & $c + (\mBAS{\,d}[u]a + k)d$
\\[0.25em]
 & $=$ & $\mBAS{d}[u]ad + c + dk$ & $=$ & $(\mBAS{\,d}[u]d + b)a + i$
\\[0.25em]
 & $=$ & $\mBAS{d}[bu]a + i$.
\end{tabular}\\[0.25em]
$\Longleftarrow$\ : \,by induction on \,$|u| \geq 0$.\\
$|u| = 0$\,: \,We have \,$u = \varepsilon = v$ \,and \,$i = j$ \,hence 
\,$i\ \chemin{u/v}\ j$.\\
Suppose the implication true for \,$|u|$ \,and 
\,$\mBAS{d}[bu]a + i \,= \,\mBAS{\,d}[cv] + j\,d^{|bu|}$ \,with \,$|u| = |v|$ 
\,and \,$0 \leq b,c < d$. So, we have\\[0.25em]
\hspace*{9em}
$(\mBAS{\,d}[u]a)d + ab + i \,= \,(\mBAS{\,d}[v] + j\,d^{|u|})d + c$.\\[0.25em]
By Euclidean division of \,$ab + i$ \,by \,$d$, we have 
\,$ab + i \,= \,kd + c'$ \,with \,$c' < d$.\\
As \,$c < d$, we have \,$c = c'$ \,hence 
\,$\mBAS{d}[u]a + k \,= \,\mBAS{\,d}[v] + j\,d^{|u|}$.\\
As \,$|u| \,= \,|v|$ \,and by induction hypothesis, \ $k\ \chemin{u/v}\ j$.\\
As \,$ab + i \,= \,kd + c$, we get \,$i\ \fleche{b/c}\ k$ \,hence 
\,$i\ \chemin{bu/cv}\ j$.
\cqfd\\

\noindent{\large\bf 7 \ \ Prefix automaticity of functions \,${\bf f_{a,b,d}}$}\\

We give a proof of the realization of \,$f_{a,b,d}$ \,by a prefix 
sequential transducer with \,$d$ \,states.
\enonce{Theorem~\ref{Transducf}}
{For all \,$0 \leq b < a \neq 1$ \,and \,$d > 0$, the following transducer 
\,${\cal G}_{a,b,d}$\\[0.25em]
\hspace*{0em}
$(\{0,\ldots,d-1\},\{0\},\omega_{a,b}\,,:_{ad,d})$ \,with 
\,$\omega_{a,b}(0) = \varepsilon$, $\omega_{a,b}(j) \,= \,aj + b$ 
\,$\forall \ 0 < j < d$\\[0.25em]
is prefix sequential and realizes a representation in base \,$ad$ \,of 
\,$f_{a,b,d}$\,.}
\mbox{}\\[-1em]\mbox{}\noindent
It suffices to generalize the proof of Proposition~\ref{Transducf'}.\\
As \,$:_{ad,d}$ \,is input-deterministic and input-complete, for any 
\,$u \in \,\downarrow_{ad}^*$\,, there exists a unique 
\,$v \in \,\downarrow_{ad}^*$ \,and a unique \,$j \in \,\downarrow_{d}$ \,such 
that \,$0\ \chemin{u/v}_{:_{ad,d}}\ j$.\\
By Lemma~\ref{CheminDiv}, 
\,$[u]\mBAS{\,ad} \,= \,d\,[v]\mBAS{\,ad} + j$.\\
For \,$j = 0$, \,$[u]\mBAS{\,ad}$ \,is a multiple of \,$d$ \,and 
\,$[v]\mBAS{\,ad} \,= \,\frac{[u]\mBAS{\,ad}}{d} \,= 
\,f_{a,b,d}([u]\mBAS{\,ad})$.\\
For \,$0 < j < d$, \,$[u]\mBAS{\,ad}$ \,is not multiple of \,$d$.\\
We have \,$j\ \ \flecheInd{(bd)/\omega_{a,b}(j)}{:_{ad,d}}\ 0$ \,hence 
\,$0\ \ \ \cheminInd{u(bd)/v.\omega_{a,b}(j)}{:_{ad,d}}\ \,0$ \,thus\\[0.25em]
\hspace*{6em}$ad\,[u]\mBAS{\,ad} + bd \,= \,[u(bd)]\mBAS{\,ad} \,= 
\,d\,[v.\omega_{a,b}(j)]\mBAS{\,ad}$\\
so\\
\hspace*{6em}$[v.\omega_{a,b}(j)]\mBAS{\,ad} \,= \,a\,[u]\mBAS{\,ad} + b \,= 
\,f_{a,b,d}([u]\mBAS{\,ad})$.
\cqfd\\
It follows the prefix automaticity of \,$f_{a,b,d}$\,.
\enonce{Corollary~\ref{SyncGauche}}
{For all integers \,$0 \leq b < a$ \,and \,$d > 0$, \,$f_{a,b,d}$ \,is prefix 
automatic.}\mbox{}\\[-1em]\mbox{}\noindent
It remains to check the case \,$a = 1$ \,and \,$b = 0$. 
For \,$d > 0$, the transducer\\[0.25em]
\hspace*{1em}$(\downarrow_d\,,0,\omega,\{\,i\ \fleche{j/i}\ j \mid 
i,j \in \,\downarrow_d\,\})$ \,with \,$\omega(0) = \varepsilon$ \,and 
\,$\omega(i) = i$ for any \,$0 < i <d$\\[0.25em]
is prefix sequential and realizes\\[0.25em]
\hspace*{6em}$\{\,(u0,0u) \mid u \in \,\downarrow_d^*\,\} \,\cup 
\,\{\,(ui,0ui) \mid u \in \,\downarrow_d^* \,\wedge \ 0 < i < d\,\}$\\[0.25em]
which is a representation of \,$f_{1,0,d}$ \,in base \,$d$ \,(with the least 
significant digit to the right).
\cqfd\\

\noindent{\large\bf 8 \ \ Prefix sequential transducers for 
\,${\bf f^{\,n}_{a,b,d}}$}\\

Let us express the composition of two divisions in the same base.
\enonce{Lemma~\ref{CompDiv}}
{For all \,$a > 1$ \,and \,$d,d' > 0$, \ 
$\mBAS{d}[\,:_{a,d}\,\compose\,:_{a,d'}]$ \,is equal to \,$:_{a,dd'}$\,.}
\mbox{}\\[-1em]\mbox{}\noindent
For all \,$0 \leq i,j < d$ \,and \,\,$0 \leq i',j' < d'$, we have\\[0.25em]
\hspace*{3em}\begin{tabular}{ll}
 & $(i,i')\ \fleche{b/c}_{:_{a,d}\,\compose\,:_{a,d'}}\ (j,j')$\\[0.25em]
$\Longleftrightarrow\ $ & 
$\exists \ 0 \leq e < a$ \,such that \,$i\ \fleche{b/e}_{:_{a,d}}\ j$ \,and 
\,$i'\ \chemin{e/c}_{:_{a,d'}}\ j'$\\[0.25em]
$\Longleftrightarrow$ & $\exists \ 0 \leq e < a$ \,such that 
\,$ia + b \,= \,ed + j$ \,and \,$i'a + e \,= \,cd' + j'$\\[0.25em]
$\Longleftrightarrow$ & $ia + b \,= \,(cd' + j' - i'a)d + j$\\[0.25em]
$\Longleftrightarrow$ & $(i + i'd)a + b \,= \,cdd' + j + j'd$\\[0.25em]
$\Longleftrightarrow$ & 
$\mBAS{d}[(i,i')]\ \fleche{b/c}_{:_{a,dd'}}\ \mBAS{d}[(j,j')]$\,.
\end{tabular}
\cqfd\\
To be complete, we verify a known property on the iterates of \,$f'_{a,b}$\,.
\enonce{Lemma~\ref{formule}}
{For all natural numbers \,$a,b,n,p,q$ \,with \,$a,b$ \,of same parity,
\\[0.25em]
\hspace*{9em}\begin{tabular}{rcl}
$f'^{\,n}_{a,b}(p2^n + q)$ & $\,=\,$ & $p\,a^{\eta_{a,b,n}(q)} + f'^{\,n}_{a,b}(q)$
\\[0.25em]
$\eta_{a,b,n}(p2^n + q)$ & $=$ & $\eta_{a,b,n}(q)$.
\end{tabular}}\mbox{}\\[-1em]\mbox{}\noindent
By induction on \,$n \geq 0$.\\
$n = 0$\,: \,$\eta_{a,b,0}$ \,is the constant mapping \,$0$ \,and 
\,$f'^{\,0}_{a,b}$ \,is the identity.\\
$n \,\Longrightarrow \,n+1$\,: \,For \,$q$ \,even, we have\\[0.25em]
\hspace*{1em}\begin{tabular}{rclcl}
$f'^{\,n+1}_{a,b}(p2^{n+1} + q)$ & $=$ & $f'^{\,n}_{a,b}(f'_{a,b}(p2^{n+1} + q))$ & 
$=$ & $f'^{\,n}_{a,b}(p2^n + \frac{q}{2})$\\[0.25em]
 & $=$ & $p\,a^{\eta_{a,b,n}(\frac{q}{2})} + f'^{\,n}_{a,b}(\frac{q}{2})$ & $=$ & 
$p\,a^{\eta_{a,b,n+1}(q)} + f'^{\,n+1}_{a,b}(q)$
\end{tabular}\\[0.25em]
and \ \ $\eta_{a,b,n+1}(p2^{n+1} + q) \ = \ \eta_{a,b,n}(p2^n + \frac{q}{2}) \ = 
\ \eta_{a,b,n}(\frac{q}{2}) \ = \ \eta_{a,b,n+1}(q)$.\\[0.25em]
For \,$q$ \,odd, we have\\[0.25em]
\hspace*{0em}\begin{tabular}{rclcl}
$f'^{\,n+1}_{a,b}(p2^{n+1} + q)$ & $=$ & $f'^{\,n}_{a,b}(f'_{a,b}(p2^{n+1} + q))$ & 
$=$ & $f'^{\,n}_{a,b}(ap2^n + \frac{aq + b}{2})$\\[0.25em]
 & $=$ & $f'^{\,n}_{a,b}(ap2^n + f'_{a,b}(q))$ & $=$ & 
$p\,a^{1+\eta_{a,b,n}(f'_{a,b}(q))} + f'^{\,n}_{a,b}(f'_{a,b}(q))$\\[0.25em]
  & & & $=$ & $p\,a^{\eta_{a,b,n+1}(q)} + f'^{\,n+1}_{a,b}(q)$\\[0.5em] 
and \ \ \ \ \ \ \ \ \ \ \ \ \ \ \ \ \ \ \ \ & \\[0.25em]
$\eta_{a,b,n+1}(p2^{n+1} + q)$ & $=$ &
$1 + \eta_{a,b,n}(f'_{a,b}(p2^{n+1} + q))$ & $=$ & 
$1 + \eta_{a,b,n}(ap2^n + f'_{a,b}(q))$\\[0.25em]
 & $=$ & $1 + \eta_{a,b,n}(f'_{a,b}(q))$ & $=$ & $\eta_{a,b,n+1}(q)$.
\end{tabular}
\cqfd\\
Similarly to Lemma~\ref{formule}, we give a basic property satisfied by the 
powers of \,$f_{a,b,d}$\,.
\enonce{Lemma~\ref{formuleBis}}
{For all natural numbers \,$a,b,d,n,p,q$ \,with \,$d > 0$, we have\\[0.25em]
\hspace*{9em}\begin{tabular}{rcl}
$f_{a,b,d}^{\,n}(pd^n + q)$ & $\,=\,$ & 
$p\,(ad)^{\mu_{a,b,d,n}(q)} + f_{a,b,d}^{\,n}(q)$\\[0.25em]
$\mu_{a,b,d,n}(pd^n + q)$ & $=$ & $\mu_{a,b,d,n}(q)$.
\end{tabular}}\mbox{}\\[-1em]\mbox{}\noindent
By induction on \,$n \geq 0$.\\
$n = 0$\,: \,immediate because \,$\mu_{a,b,d,0}$ \,is the constant mapping \,$0$ 
\,and \,$f^0_{a,b,d}$ \,is the identity.\\
$n \,\Longrightarrow \,n+1$\,: \,For \,$q$ \,multiple of \,$d$, we have
\\[0.25em]
\hspace*{1em}\begin{tabular}{rclcl}
$f^{\,n+1}_{a,b,d}(pd^{n+1} + q)$ & $\,=\,$ & 
$f^{\,n}_{a,b,d}(f_{a,b,d}(pd^{n+1} + q))$ & $\,=\,$ & 
$f^{\,n}_{a,b,d}(pd^n + \frac{q}{d})$\\[0.25em]
 & $=$ & $p\,(ad)^{\mu_{a,b,d,n}(\frac{q}{d})} + f^{\,n}_{a,b,d}(\frac{q}{d})$ & $=$ & 
$p\,(ad)^{\mu_{a,b,d,n+1}(q)} + f^{\,n+1}_{a,b,d}(q)$
\end{tabular}\\[0.25em]
and \ \ $\mu_{a,b,d,n+1}(pd^{n+1} + q) \ = \ \mu_{a,b,d,n}(pd^n + \frac{q}{d}) \ = 
\ \mu_{a,b,d,n}(\frac{q}{d}) \ = \ \mu_{a,b,d,n+1}(q)$.\\[0.25em]
For \,$q$ \,not multiple of \,$d$, we have\\[0.25em]
\hspace*{3em}\begin{tabular}{clcl}
  & $f^{\,n+1}_{a,b,d}(pd^{n+1} + q)$\\[0.25em]
$=$ & $f^{\,n}_{a,b,d}(f_{a,b,d}(pd^{n+1} + q))$ & 
$=$ & $f^{\,n}_{a,b,d}(apd^{n+1} + aq + b)$\\[0.25em]
$=$ & $f^{\,n}_{a,b,d}((pad)d^n + f_{a,b,d}(q))$ & $=$ & 
$pad\,(ad)^{\mu_{a,b,d,n}(f_{a,b,d}(q))} + f^{\,n}_{a,b,d}(f_{a,b,d}(q))$\\[0.25em]
$=$ & $p\,(ad)^{\mu_{a,b,d,n+1}(q)} + f^{\,n+1}_{a,b,d}(q)$\\[0.5em] 
and & \\[0.25em]
 & $\mu_{a,b,d,n+1}(pd^{n+1} + q)$\\[0.25em]
$=$ & $1 + \mu_{a,b,d,n}(apd^{n+1} + aq + b)$ & $=$ & 
$1 + \mu_{a,b,d,n}((pad)d^n + f_{a,b,d}(q))$\\[0.25em]
$=$ & $1 + \mu_{a,b,d,n}(f_{a,b,d}(q))$ & $=$ & $\mu_{a,b,d,n+1}(q)$.
\end{tabular}
\cqfd\\
We explicit the composition \,$n$ \,times of the transducer 
\,${\cal G}_{a,b,d}$\,.
\enonce{Theorem~\ref{TransducfIter}}
{For all integers \,$n \geq 0$ \,and \,$0 \leq b < a \neq 1$ \,and \,$d > 0$,
\\[0.25em]
\hspace*{3em}$\mBAS{d}[{\cal G}^{\,n}_{a,b,d}] \,= 
\,(\{0,\ldots,d^n\!-\!1\},\{0\},\omega_n\,,:_{ad,d^n})$ \ with for any 
\,$0 \leq i < d^n$,\\[0.25em]
\hspace*{1em}$\omega_n(i) \in \{0,\ldots,ad - 1\}^*$ \ with 
\ $[\omega_n(i)]\mBAS{\,ad} \,= \,f_{a,b,d}^n(i)$ \ and \ 
$|\omega_n(i)| \,= \,\mu_{a,b,d,n}(i)$.}\mbox{}\\[-1em]\mbox{}\noindent
By induction on \,$n \geq 0$.\\
$n = 0$\,: \,${\cal G}^{\,0}_{a,b,d} \,= 
\,(\{\varepsilon\},\{\varepsilon\},\omega,\{\,\varepsilon\ 
\fleche{c/c}\ \varepsilon \mid c \in \{0,\ldots,ad-1\}\,\})$ \,with 
\,$\omega(\varepsilon) \,= \,\varepsilon$.\\[0.25em]
$n \ \Longrightarrow \ n+1$\,: \,we have 
\,${\cal G}^{\,n+1}_{a,b,d} \,= \,{\cal G}_{a,b,d}\,\compose\ {\cal G}^{\,n}_{a,b,d}$
\,for the composition of prefix sequential transducers.\\
By Lemma~\ref{CompDiv}, the transition relation 
\,$\mBAS{d}[\,:_{ad,d}\,\compose\,:_{ad,d^n}]$ \,is equal to \,$:_{ad,d^{n+1}}$\,.\\
It remains to verify that \,$\omega_{n+1}$ \,is the terminal function of 
\,$\mBAS{d}[{\cal G}^{\,n+1}_{a,b,d}]$.\\
As \,$\omega_{a,b}(0) \,= \,\varepsilon$, we get 
\,$\omega_{n+1}(\mBAS{\,d}[0u]) \,= \,\omega_n(\mBAS{\,d}[u])$ \,for any 
\,$u \in \downarrow^n_d$\\
{\it i.e.} \,$\omega_{n+1}(di) \,= \,\omega_n(i)$ \,for all \,$0 \leq i < d^n$. 
By induction hypothesis, we get\\[0.25em]
\hspace*{6em}\begin{tabular}{rllllll}
$[\omega_{n+1}(di)]\mBAS{\,ad}$ & $\,=\,$ & $[\omega_n(i)]\mBAS{\,ad}$ & 
$\,=\,$ & $f^{\,n}_{a,b,d}(i)$ & $\,=\,$ & $f^{\,n+1}_{a,b,d}(di)$\\[0.25em]
$|\omega_{n+1}(di)|$ & $\,=\,$ & $|\omega_n(i)|$ & $\,=\,$ & 
$\mu_{a,b,d,n}(i)$ & $\,=\,$ & $\mu_{a,b,d,n+1}(di)$.
\end{tabular}\\[0.25em]
Let \,$0 \leq i < d^n$ \,and \,$0 < j < d$. 
So \,$\omega_{a,b}(j) \,= \,aj + b \,\leq \,a(d-1) + b \,< \,ad$.\\
There exists unique \,$k$ \,and \,$c$ \,such that 
\,$i\,\ \flecheInd{aj+b/c}{:_{ad,d^n}}\ k$\\
thus \ $\omega_{n+1}(di+j) \,= \,c.\omega_n(k)$.\\[0.25em]
Moreover \,$f_{a,b,d}(di+j) \,= \,iad + aj + b \,= \,cd^n + k$. 
By Lemma~\ref{formuleBis} and ind. hyp.,\\[0.25em]
\hspace*{0em}\begin{tabular}{rllllllllll}
$[\omega_{n+1}(di+j)]\mBAS{\,ad}$ & $\,=\,$ & $[c.\omega_n(k)]\mBAS{\,ad}$ &
$\,=\,$ & $c\,(ad)^{|\omega_n(k)|} + [\omega_n(k)]\mBAS{\,ad}$ & $\,=\,$ & 
$c\,(ad)^{\mu_{a,b,d,n}(k)} + f^{\,n}_{a,b,d}(k)$\\[0.25em]
 & $\,=\,$ & $f^{\,n}_{a,b,d}(cd^n + k)$ & $\,=\,$ & $f^{\,n+1}_{a,b,d}(di+j)$
\end{tabular}\\
and\\[0.25em]
\hspace*{0em}\begin{tabular}{rllllllllll}
$|\omega_{n+1}(di+j)|$ & $\,=\,$ & $1 + |\omega_n(k)|$ & $\,=\,$ & 
$1 + \mu_{a,b,d,n}(k)$ & $\,=\,$ & $1 + \mu_{a,b,d,n}(cd^n + k)$\\[0.25em]
 & $\,=\,$ & $1 + \mu_{a,b,d,n}(f_{a,b,d}(di+j))$ & $\,=\,$ & 
$\mu_{a,b,d,n+1}(di+j)$.
\end{tabular}
\cqfd\\

\noindent{\large\bf 9 \ \ Prefix input-deterministic transducers for 
\,${\bf f^{\,*}_{a,b,d}}$}\\

We explicit the closure under composition of the transducer 
\,${\cal G}_{a,b,d}$\,.
\enonce{Theorem~\ref{ResultatGene}}
{For all integers \,$0 \leq b < a \neq 1$ \,and \,$d > 0$,\\[0.25em]
\hspace*{1.5em}
${\cal G}^{\,*}_{a,b,d} \ = \ (\downarrow_d^*,0^*,\omega_{a,b,d},\,:_{ad,d}^{\,*})$ 
\ with for all \,$u \in \,\downarrow_d^*$ \,and \,$0 < i < d$,\\[0.25em]
\hspace*{1.5em}$\omega_{a,b,d}(0u) \,= \,\omega_{a,b,d}(u)$ \ and \ 
$\omega_{a,b,d}(iu) \,= \,c.\omega_{a,b,d}(v)$ \ for \ 
$iu\ \fleche{bd/c}_{:_{ad,d}^{\,*}}\ 0v$.}\mbox{}\\[-1em]\mbox{}\noindent
We have seen in the proof of Theorem~\ref{TransducfIter} that for all 
\,$n \geq 0$, the terminal function \,$\omega_{n+1}$ \,of 
\,${\cal G}^{\,n+1}_{a,b,d}$ \,is defined recursively for all \,$0 \leq i < d^n$ 
\,and \,$0 < j < d$~\,by\\[0.25em]
\hspace*{6em}\begin{tabular}{rcll}
$\omega_{n+1}(di)$ & $\,=\,$ & $\omega_n(i)$\\[-0.25em]
$\omega_{n+1}(di + j)$ & $\,=\,$ & $c.\omega_n(k)$ & \ for \ 
$i\,\ \fleche{aj + b/c}_{:_{ad,d^n}}\ k$
\end{tabular}\\[0.25em]
and we have\\[0.25em]
\hspace*{0em}\begin{tabular}{rclcl}
$i\,\ \fleche{aj + b/c}_{:_{ad,d^n}}\ k$ & $\,\Longleftrightarrow\,$ & 
$iad + aj + b \,= \,cd^n + k$ & $\,\Longleftrightarrow\,$ & 
$(di+j)ad + bd \,= \,cd^{n+1} + dk$\\[0.5em]
 & $\Longleftrightarrow$ & $di + j\,\ \fleche{bd/c}_{:_{ad,d^{n+1}}}\ dk$\,.
\end{tabular}
\cqfd

\begin{thebibliography}{8}

\bibitem{Al} J.-P. Allouche\,: 
\textsl{T. Tao et la conjecture de Syracuse}, 
Gazette de la SMF~168, 34--39 (2021).

\bibitem{Be} J. Berstel\,: 
\textsl{Transductions and context-free languages}, 
Teubner Studienb\"ucher Informatik (1979).

\bibitem{Ei} S. Eilenberg\,: 
\textsl{Automata, languages and machines}, Vol. A, Academic Press (1974).

\bibitem{EM} C. Elgot and J. Mezei\,: 
\textsl{On relations defined by generalized finite automata},
IBM journal of research and development~9, 47--68 (1965)
  
\bibitem{FS} C. Frougny and J. Sakarovitch\,: 
\textsl{Synchronized rational relations of finite and infinite words},
Theoretical Computer Science~108~(1), 45--82 (1993).

\bibitem{Gi} S. Ginsburg\,: 
\textsl{The mathematical theory of context-free languages}, 
McGraw-Hill (1966).

\bibitem{La} J. Lagarias\,: 
\textsl{The ultimate challenge: the 3x+1 problem}, 
American Mathematical Society (2010).

\bibitem{Me} G. Mealy\,: 
\textsl{A Method for Synthesizing Sequential Circuits}, 
Bell System Technical Journal, 1045--1079 (1955).

\bibitem{RS} M. Rabin and D. Scott\,: 
\textsl{Finite automata and their decision problems},
IBM journal of research and development~3(2),
125--144 (1959). 
Reprinted in \textsl{Sequential machines: Selected papers}, 
Addison Wesley (1965).

\bibitem{Sa} J. Sakarovitch\,: 
\textsl{Elements of automata theory}, 
Cambridge University Press (2009).
  
\bibitem{Sc} M.-P. Sch\"utzenberger\,: 
\textsl{Sur une variante des fonctions s\'equentielles}, 
Theoretical Computer Science~4~(1), 47--57 (1977).

\end{thebibliography}
\end{document}